\documentclass[journal,twoside,web]{ieeecolor}
\usepackage{lcsys}

\usepackage{cite}
\usepackage{amsmath,amssymb,amsfonts}
\usepackage{algorithm}
\usepackage{algpseudocode}
\usepackage{graphicx}
\usepackage{textcomp}

\usepackage{mathtools}
\usepackage{cleveref}

\newtheorem{theorem}{Theorem}
\newtheorem{proposition}[theorem]{Proposition}
\newtheorem{definition}[theorem]{Definition}

\newtheorem{lemma}[theorem]{Lemma}
\newtheorem{remark}[theorem]{Remark}
\newtheorem{example}{Example}
\newtheorem{assumption}{Assumption}
\AddToHook{env/proposition/begin}{\crefalias{theorem}{proposition}}
\AddToHook{env/definition/begin}{\crefalias{theorem}{definition}}
\AddToHook{env/corollary/begin}{\crefalias{theorem}{corollary}}
\AddToHook{env/lemma/begin}{\crefalias{theorem}{lemma}}
\AddToHook{env/remark/begin}{\crefalias{theorem}{remark}}
\AddToHook{env/example/begin}{\crefalias{theorem}{example}}
\usepackage{tikz}
\usetikzlibrary{cd}
\usetikzlibrary{shapes.geometric, arrows}
\usepackage{graphicx}
\usepackage{subcaption}
	
\newcommand{\R}{\mathbb{R}}
\newcommand{\B}{\mathfrak{B}}
\newcommand{\N}{\mathbb{N}}
\newcommand{\C}{\mathbb{C}}
\newcommand{\bs}[1]{\boldsymbol{#1}}

\newcommand{\hankel}[2]{\mathcal{H}_{#2}({#1})}

\newcommand{\sigb}{\sigma^{\pm 1}}

\begin{document}
	\title{Data-driven design of deadbeat observers for higher-order LTI systems}
	\author{Aishwarya Bharti and Debasattam Pal,  \IEEEmembership{Member, IEEE}
		\thanks{Aishwarya Bharti and Debasattam Pal are with the Department of Electrical Engineering, Indian Institute of Technology Bombay, India. (e-mail: ab@ee.iitb.ac.in, debasattam@ee.iitb.ac.in).This work has been submitted to the IEEE for possible publication.}, }
	
	\maketitle
	
	\begin{abstract}
	  In this letter, we derive a necessary and sufficient condition for the existence of a deadbeat observer (DBO) for systems described by higher-order difference equations based on finite historical data. Furthermore, we develop an algorithm to construct a consistent DBO and to parametrise all DBOs directly from the historical data. Finally, we apply the proposed approach to the problem of state estimation in the presence of unknown inputs.   
	\end{abstract}
	
	\begin{IEEEkeywords}
	Data-driven reconstructibility, data-driven deadbeat observer, unknown input observer
	\end{IEEEkeywords}
	
	\section{Introduction}
	\label{sec:introduction}
	\IEEEPARstart{T}{he} growing availability of data, lower storage costs, and increased computational capabilities have motivated the use of historical data for the classical state-space estimation problems (see for example  \cite{DBO_10720866,DBOadd_mishra2022data,DBOadd_wei2026distributed, DBO_fattore2025data, DBO_shi2022data,DBO_turan2021data, DBO_disaro2024delayed}). However, to the best of our knowledge, these works are limited to first-order system representation and focus on specific estimation problems, such as unknown input observer\cite{DBO_10720866}. This causes the observer to have a particular structure (see, e.g., \cite[Definition 1]{DBO_10720866}), thereby limiting the scope of the analysis to only a restrictive class of admissible observers. In contrast, we deal with systems described by higher-order difference equations, and solve for the estimation of a subset of system variables, referred to as the \emph{relevant variables}, from another subset, referred to as the \emph{measured variables}. As articulated in the fundamental works of Valcher and Willems \cite{valcher1999observer, valcher1999dead}, this approach of ours to consider systems described by higher-order difference equations provides a unified platform for addressing classical estimation problems, including state estimation, state estimation in the presence of disturbances, fault detection, and input reconstruction, by appropriately selecting the system variables\cite{valcher_bisiacco2006behavioral}. In this letter, we focus on the deadbeat observer (DBO) design for systems described by higher-order difference equations based solely on finite historical data of relevant and measured variables; that is, an observer that estimates the relevant variables exactly after a finite duration from the measured variables. Moreover, unlike approaches restricted to particular observer representations, we provide an algorithm that computes one consistent DBO and a complete parametrisation of all DBOs described by higher-order difference equations.     

	We begin by establishing a necessary and sufficient condition for the existence of a DBO using only finite historical data. We achieve this by developing a data-driven approach to the crucial notion of \emph{reconstructibility}, which is known to be equivalent to the existence of DBOs \cite[Proposition 4.2]{valcher1999dead}. In the process, we introduce the notion of \emph{local reconstructibility} which plays a central role in enabling the use of finite data to determine the existence of DBOs and to construct a DBO solely from finite historical data. Furthermore, we provide a parametrisation of the complete family of DBOs using the available finite data.
	To illustrate our results, we apply the proposed algorithm to the problem of state estimation in the presence of unknown inputs (disturbances) using only historical data. 
	Thus, our contributions are as follows:
	\begin{itemize}
		\item We introduce the notion of \emph{local reconstructibility} for restricted behaviours and derive a condition for its equivalence to the global reconstructibility (\Cref{sec:local_recons}).
		\item We provide a necessary and sufficient condition for testing global reconstructibility from a finite historical dataset under a suitable informativity assumption (\Cref{sec:datarecons}).
		\item We present an algorithm for constructing a consistent DBO together with a complete parametrisation of all DBOs from a finite historical dataset (\Cref{sec:algorithm}).
	\end{itemize}
	 	\textit{Notation:}
	We denote the sets of natural numbers (including $0$) and real numbers by  $\N$ and $\R$, respectively. For $k\in \N$, we use the boldface $\boldsymbol{k}$ to denote the set $\{0,1,\ldots,k\}$. For two sets $\mathbb{W}$ and $\mathbb{T}$, the symbol $\mathbb{W}^{\mathbb{T}}$ denotes the set of all maps from $\mathbb{T}$ to $\mathbb{W}$, i.e., $\mathbb{W}^\mathbb{T}\coloneqq \left\{f\mid f: \mathbb{T}\to \mathbb{W}\right\}$. We use $\R[\sigma]$ and $\R[\sigb]$ to denote the rings of polynomials and Laurent polynomials in $\sigma$, respectively, with coefficients in $\R$. We use $\R^{m\times n}[\sigma]$ and $\R^{m\times n}[\sigb]$ for the sets of $m\times n$ matrices over $\R[\sigma]$ and $\R[\sigb]$, respectively. For $v_i\in \R^{m_i}$ (the space of $m_i$-tuples of real numbers), $i=1,\ldots,k$, $\mathrm{col}(v_1,v_2,\ldots,v_k)\in \R^{\sum_{i=1}^{k}m_i}$ denotes their vertical concatenation.

	\section{Preliminaries}    
	   In this letter, we consider the set of trajectories of a linear time-invariant (LTI) system, defined as the solution set of a system of higher-order difference equations with constant real coefficients. 
	   We define the left shift operator $\sigma$ that acts on $\bs{w}\in (\R^q)^\N$ as $(\sigma^k \bs{w})(t)=\bs{w}(t+k)$ for $k\in \N$. Let $R_1(\sigma)\in \R^{g\times q_1}[\sigma]$ and $R_2(\sigma)\in \R^{g\times q_2}[\sigma]$ be operator matrices that act on vector-valued trajectories of appropriate sizes.
	   We consider the behaviour $\B$ consisting of the trajectories $\bs{w}\in (\R^q)^{\N}$ partitioned into \textit{measured} variables, denoted by $\bs{w}_1\in (\R^{q_1})^{\N}$, and \textit{relevant} variables, denoted by $\bs{w}_2\in (\R^{q_2})^{\N}$, where $q_1+q_2=q$, that satisfy $R_1(\sigma)\bs{w}_1+R_2(\sigma)\bs{w}_2=0$. That is,
	   \begin{align}\label{eqn:behaviour}
	   		\B\!=\!\left\{\bs{w}\!=\!\begin{bsmallmatrix}
	   			\bs{w}_1\\\bs{w}_2
	   		\end{bsmallmatrix}\in \!(\R^{q})^\N \!\bigm| \!R_1(\sigma)\bs{w}_1\!+\!R_2(\sigma)\bs{w}_2\!=\!0 \right\}.
	   \end{align}  
	  With $R(\sigma)\coloneqq\begin{bsmallmatrix}
	  	R_1(\sigma)&~ R_2(\sigma)
	  \end{bsmallmatrix}$, $\B=\ker R(\sigma)$ is a \emph{kernel representation} and $R(\sigma)$ is its \emph{kernel representation matrix}. Note that a behaviour usually admits multiple kernel representations, all equivalent under unimodular transformations \cite[Section III]{willems1991paradigms}. Since $\R[\sigma]$ is a principal ideal domain (PID), $\B$ admits a full row rank kernel representation. For a full row rank $R(\sigma)\in \R^{g\times q}[\sigma]$, the maximum among the degrees of all $g\times g$ minors is called the \emph{order/McMillan degree} of the system\cite[Section X]{willems1991paradigms}; this plays a crucial role in our analysis.

In this letter, we address the problem of obtaining a deadbeat observer (DBO) of the relevant variables from the measured variables using a previously collected dataset. A DBO of $\bs{w}_2$ from $\bs{w}_1$ for a system provides an estimate $\widehat{\bs{w}}_2$ that coincides with $\bs{w}_2$ after a finite interval of time. DBOs are classified as consistent or non-consistent \cite{valcher_bisiacco2006behavioral}; we focus on obtaining a consistent DBO and a parametrisation of all DBOs. 
\begin{definition}\label{def:DBO}
Let $\B$ be an LTI system described by \eqref{eqn:behaviour}. For $g'\in \N$, $P(\sigma)\in \R^{g'\times q_1}[\sigma]$, and $Q(\sigma)\in \R^{g'\times q_2}[\sigma]$, the system
		$	Q(\sigma)\widehat{\bs{w}}_2=P(\sigma)\bs{w}_1$
	is said to be a \emph{deadbeat observer (DBO)} of $\bs{w}_2$ from $\bs{w}_1$ for $\B$ if both the following statements hold.
		\begin{itemize}
			\item[(a)] For all $(\bs{w}_1,\bs{w}_2)\in \B$, there exists $\widehat{\bs{w}}_2$ such that $Q(\sigma)\widehat{\bs{w}}_2=P(\sigma)\bs{w}_1$.
			\item[(b)] There exists $N\in \N$ such that for all $(\bs{w}_1,\bs{w}_2)\in \B$ and all $(\bs{w}_1,\widehat{\bs{w}}_2)$ satisfying $Q(\sigma)\widehat{\bs{w}}_2=P(\sigma)\bs{w}_1$, we have $\bs{w}_2(t)-\widehat{\bs{w}}_2(t)=0$ for all $t\geqslant N$. 
		\end{itemize}
Further, the DBO is said to \textcolor{black}{be \emph{consistent}} if in addition every $(\bs{w}_1,\bs{w}_2)\in \B$ also satisfies $Q(\sigma){\bs{w}}_2=P(\sigma)\bs{w}_1$. 
	
\end{definition}

In the sequel, we use $\bs{w}_1$ and $\bs{w}_2$ to represent the measured and relevant variables, respectively. Thus, we refer to a DBO of $\bs{w}_2$ from $\bs{w}_1$ for $\B$ simply as a DBO for $\B$.
An important step in obtaining a DBO for $\B$ is to determine whether one exists or not in the first place. This is characterised by the property of \textit{reconstructibility} of $\bs{w}_2$ from $\bs{w}_1$ \cite[Proposition 4.2]{valcher1999dead}, defined below. In order to distinguish it from \emph{local} reconstructibility, defined later in this letter, for trajectories restricted over a finite interval, we call it \emph{global} reconstructibility.

\begin{definition}[Global reconstructibility]\label{def:recons}
	Consider an LTI system $\B$ given by \eqref{eqn:behaviour}. \textcolor{black}{For a fixed $N\in \N$}, we say that $\bs{w}_2$ is \emph{globally $N$-reconstructible} from $\bs{w}_1$ if $(\bs{w}_1,\bs{w}_2), (\bs{w}_1,\overline{\bs{w}}_2)\in \B$ implies that  $\bs{w}_2(t)-\overline{\bs{w}}_2(t)=0$, for all $t\geqslant N$. We say $\bs{w}_2$ is \emph{globally reconstructible} from $\bs{w}_1$ if \textcolor{black}{there exists $N\in \N$ such that} $\bs{w}_2$ is globally $N$-reconstructible from $\bs{w}_1$.
\end{definition}

The following result, derived easily from \cite[Theorem 3]{valcher_bisiacco2006behavioral}, gives equivalent conditions for the existence of a DBO.

\begin{proposition}\label{prop: recons_monomic_DBOform}
		Consider an LTI system $\B$ described by \eqref{eqn:behaviour}. The following are equivalent:
		\begin{itemize}
			\item[(i)]  $\bs{w}_2$ is globally reconstructible from $\bs{w}_1$.
			\item[(ii)] The matrix $R_2(\sigma)\in \R^{g\times q_2}[\sigma]$ is right monomic\footnote{A matrix $M(\sigma)\in \R^{p\times q}[\sigma]$ is \emph{right monomic} if $\operatorname{rank}M(\lambda)=q$ for all $\lambda\in \C\setminus\{0\}$, equivalently, if $\exists~ N\in \N$, $X(\sigma)\in \R^{q\times p}[\sigma]$ s.t. $X(\sigma)M(\sigma)=\sigma^NI_q.$ If $p=q$, $M(\sigma)$ is called square monomic\cite[Section I]{valcher_bisiacco2006behavioral}. }.
			\item[(iii)] There exist $N\in \N$ and $P(\sigma)\in \R^{q_2\times q_1}[\sigma]$ such that $R_{\rm DBO}(\sigma)\coloneqq \begin{bsmallmatrix}
				-P(\sigma)&\quad\sigma^N I_{q_2}
			\end{bsmallmatrix}$ provides a consistent DBO for $\B$ given by $	R_{\rm DBO}(\sigma)\mathrm{col}(
		\bs{w}_1,
	\widehat{\bs{w}}_2)=0.$
		\end{itemize}
	\end{proposition}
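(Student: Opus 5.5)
We will prove the cycle (i)$\Rightarrow$(ii)$\Rightarrow$(iii)$\Rightarrow$(i), working throughout with the full row rank kernel representation $R(\sigma)=\begin{bsmallmatrix}R_1(\sigma)&R_2(\sigma)\end{bsmallmatrix}$.

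\emph{(i)$\Rightarrow$(ii).} Global reconstructibility is equivalent to the following statement about the hidden behaviour $\ker R_2(\sigma)$: every $\bs{v}\in(\R^{q_2})^{\N}$ with $R_2(\sigma)\bs{v}=0$ satisfies $\bs{v}(t)=0$ for $t\geqslant N$. Indeed, if $(\bs{w}_1,\bs{w}_2),(\bs{w}_1,\overline{\bs{w}}_2)\in\B$, then $\bs{v}=\bs{w}_2-\overline{\bs{w}}_2$ satisfies $R_2(\sigma)\bs{v}=0$. Conversely, $(0,\bs{v})\in\B$ for any $\bs{v}\in\ker R_2(\sigma)$, and $(0,0)\in\B$.

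We argue by contraposition. Suppose $\operatorname{rank}R_2(\lambda)<q_2$ for some $\lambda\in\C\setminus\{0\}$, and pick $v\neq 0$ in $\ker R_2(\lambda)$. Then $\bs{v}(t)=\lambda^t v$ lies in $\ker R_2(\sigma)$ over $\C$. Its real and imaginary parts also lie in $\ker R_2(\sigma)$, since $R_2$ has real coefficients, and at least one of them does not vanish eventually. This contradicts (i). Note that the case where $R_2$ lacks full column rank as a polynomial matrix is included, since then the rank drops at every $\lambda$.

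\emph{(ii)$\Rightarrow$(iii).} This is the constructive core of the proof. By the footnote characterisation, there exist $X(\sigma)\in\R^{q_2\times g}[\sigma]$ and $N$ with $X(\sigma)R_2(\sigma)=\sigma^N I_{q_2}$. Set $P(\sigma):=-X(\sigma)R_1(\sigma)$.

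\emph{Consistency.} For $\bs{w}\in\B$, left-multiplying $R_1\bs{w}_1+R_2\bs{w}_2=0$ by $X$ gives $\sigma^N\bs{w}_2=P\bs{w}_1$. This also gives condition (a), by taking $\widehat{\bs{w}}_2=\bs{w}_2$.

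\emph{Condition (b).} If $\sigma^N\widehat{\bs{w}}_2=P\bs{w}_1$, subtracting gives $\sigma^N(\bs{w}_2-\widehat{\bs{w}}_2)=0$, so the error vanishes for all $t\geqslant N$.

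The main subtlety here is justifying the equivalence stated in the footnote, namely that the rank condition on $\C\setminus\{0\}$ yields a left inverse up to a power of $\sigma$. I would cite it as given, via \cite[Theorem 3]{valcher_bisiacco2006behavioral}. Alternatively, it can be derived from the Smith form of $R_2$: its invariant factors have no roots except $0$, hence are monomials $c\sigma^{k}$, and inverting the Smith form with unimodular multipliers gives $X$.

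\emph{(iii)$\Rightarrow$(i).} Let $(\bs{w}_1,\bs{w}_2),(\bs{w}_1,\overline{\bs{w}}_2)\in\B$. By consistency, both satisfy $\sigma^N\cdot=P\bs{w}_1$. Hence $\sigma^N(\bs{w}_2-\overline{\bs{w}}_2)=0$, so $\bs{w}_2(t)=\overline{\bs{w}}_2(t)$ for all $t\geqslant N$. Alternatively, one can view $\overline{\bs{w}}_2$ as an estimate $\widehat{\bs{w}}_2$ and apply (b) directly.
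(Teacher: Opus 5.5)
Your proof is correct. For (ii)$\Rightarrow$(iii) and (iii)$\Rightarrow$(i) it coincides with the paper's argument: left-multiply by $X(\sigma)$ with $X(\sigma)R_2(\sigma)=\sigma^N I_{q_2}$, then subtract the two equations $\sigma^N(\cdot)=P(\sigma)\bs{w}_1$.

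The difference is in (i)$\Leftrightarrow$(ii). The paper cites \cite[Theorem 3]{valcher_bisiacco2006behavioral} for both directions. You instead prove (i)$\Rightarrow$(ii) directly:
\begin{itemize}
\item you first reduce global reconstructibility to the statement that every $\bs{v}\in\ker R_2(\sigma)$ vanishes for $t\geqslant N$;
\item whenever $R_2(\lambda)$ drops rank at some $\lambda\neq 0$, you exhibit the real or imaginary part of $\lambda^t v$ as a real trajectory in $\ker R_2(\sigma)$ that is not eventually zero.
\end{itemize}
The converse (ii)$\Rightarrow$(i) then comes for free by closing the cycle through the constructive DBO.

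Your route is therefore self-contained except for the purely algebraic equivalence in the footnote: the rank condition on $\C\setminus\{0\}$ versus a polynomial left inverse up to $\sigma^N$. Your Smith-form sketch supplies this correctly. The invariant factors have no nonzero roots, so they are monomials $c_i\sigma^{k_i}$, and $N=\max_i k_i$ works. Your argument also makes transparent why only $\lambda\neq 0$ matters: exponential modes with $\lambda\neq 0$ never die out on $\N$. The paper's version is shorter, but it outsources exactly this behavioural content to the reference.

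One small remark: none of your steps uses that $R(\sigma)$ has full row rank. That is just as well, since the statement concerns the $R_2(\sigma)$ of whatever representation \eqref{eqn:behaviour} is given.
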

\begin{proof}
	$(i)\iff (ii)$: Refer to \cite[Theorem 3]{valcher_bisiacco2006behavioral}.\\
	$(ii)\implies (iii)$: Since $R_2(\sigma)$ is right monomic, $\exists~N\in \N$ and a matrix $X(\sigma)\in \R^{q_2\times g}$ such that $X(\sigma)R_2(\sigma)=\sigma^NI_{q_2}$. Define $R_{\rm DBO}(\sigma)\coloneqq X(\sigma)\begin{bsmallmatrix}
		R_1(\sigma)&\quad R_2(\sigma)
	\end{bsmallmatrix}\coloneqq \begin{bsmallmatrix}
	-P(\sigma)&\quad \sigma^NI_{q_2}
\end{bsmallmatrix}$. Then, every $\bs{w}=\mathrm{col}(\bs{w}_1,\bs{w}_2)\in \B$ satisfies $R_{\rm DBO}(\sigma)\bs{w}=0$. Also, for all $\mathrm{col}(\bs{w}_1, \widehat{\bs{w}}_2)\in \ker R_{\rm DBO}$, we have $\sigma^N \widehat{\bs{w}}_2=P(\sigma)\bs{w}_1=\sigma^N \bs{w}_2$ $\forall~t$. Therefore, $\bs{w}_2(t)-\widehat{\bs{w}}_2(t)=0~ \forall~ t\geqslant N$. 

\noindent$(iii)\!\!\!\!\!\!\implies\!\!\!\! (i)$: Since $R_{\rm DBO}(\sigma)$ provides a consistent DBO, every $\bs{w}\in \B$ satisfies $R_{\rm DBO}(\sigma)\bs{w}=0$. Thus, if $(\bs{w}_1,\bs{w}_2), (\bs{w}_1, \overline{\bs{w}}_2)\in \B$, then $\sigma^N(\bs{w}_2(t)-\overline{\bs{w}}_2(t))=0$. By \Cref{def:recons}, $\bs{w}_2$ is globally $N$-reconstructible from $\bs{w}_1$. 
\end{proof}	

It is shown in \cite[Lemma 4.3]{valcher1999dead} that every DBO for $\B$ admits an equivalent\footnote{Define the projected behaviour $\Pi_{\bs{w}_1}\B\coloneqq \left\{\bs{w}_1\mid \exists~ (\bs{w}_1,\bs{w}_2)\in \B \right\}$. Two DBOs, $\widehat{\B}_1$ and $\widehat{\B}_2$, of $\B$ are said to be \emph{equivalent} if $\{(\bs{w}_1,\widehat{\bs{w}}_2)\in \widehat{\B}_1\mid \bs{w}_1\in \Pi_{\bs{w}_1}\B)\}=\{(\bs{w}_1,\widehat{\bs{w}}_2)\in \widehat{\B}_2\mid \bs{w}_1\in \Pi_{\bs{w}_1}\B\}$\cite[Section 4]{valcher1999dead}.} DBO with square monomic $Q(\sigma)$. Hence, throughout this letter, we restrict our attention to such DBOs. 
\textcolor{black}{ It is a well-known fact that a conventional state-space deadbeat observer converges to the exact state trajectory in at most $n$ (order of the system) steps\cite[Chapter 6]{ogata1995discrete}. Interestingly, for general higher-order systems also, if the system admits DBOs, then one DBO exists with convergence step within $n$ steps; an algebraic proof of this is provided below. }
\begin{lemma}\label{lem:Nmax}
	Consider an LTI system $\B$ described by \eqref{eqn:behaviour}. Let $n$ be the order of the system. Then, $\bs{w}_2$ is globally reconstructible from $\bs{w}_1$ if and only if for some $N\in \boldsymbol{n}\coloneqq \{0,1,\ldots,n\}$, $\bs{w}_2$ is globally $N$-reconstructible from $\bs{w}_1$.
\end{lemma}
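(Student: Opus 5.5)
The plan is to reduce global $N$-reconstructibility to a property of the autonomous behaviour $\B_0\coloneqq\ker R_2(\sigma)=\{\bs{w}_2\in(\R^{q_2})^\N\mid R_2(\sigma)\bs{w}_2=0\}$. Then I use a dimension count to bound the nilpotency index of the shift on $\B_0$ by $n$. The ``if'' direction is immediate from \Cref{def:recons}, so only the ``only if'' direction needs work.

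\textbf{Step 1 (reduction to $\B_0$).} By linearity, $(\bs{w}_1,\bs{w}_2),(\bs{w}_1,\overline{\bs{w}}_2)\in\B$ implies $(0,\bs{w}_2-\overline{\bs{w}}_2)\in\B$, i.e.\ $\bs{w}_2-\overline{\bs{w}}_2\in\B_0$. Conversely, every $\bs{v}\in\B_0$ arises this way, taking $(0,\bs{v})$ and $(0,0)$. Hence $\bs{w}_2$ is globally $N$-reconstructible from $\bs{w}_1$ if and only if every $\bs{v}\in\B_0$ satisfies $\bs{v}(t)=0$ for all $t\geqslant N$. Equivalently, $\sigma^N\bs{v}=0$ for all $\bs{v}\in\B_0$.

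\textbf{Step 2 (finite dimension of $\B_0$, bounded by $n$).} Assume global reconstructibility. By \Cref{prop: recons_monomic_DBOform}, $R_2(\sigma)$ is right monomic, so it has full column rank $q_2$ over $\R(\sigma)$.

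Choose a unimodular $U(\sigma)$ (row compression, e.g.\ Hermite form) such that $U R_2=\begin{bsmallmatrix}D\\0\end{bsmallmatrix}$ with $D\in\R^{q_2\times q_2}[\sigma]$ nonsingular. Then $\B_0=\ker D(\sigma)$. By the Smith form, $\dim_\R\ker D(\sigma)=\deg\det D$ on the time axis $\N$.

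Next, write $UR=\begin{bsmallmatrix}A&D\\C&0\end{bsmallmatrix}$. This matrix is still full row rank, and its $g\times g$ minors coincide with those of $R$ up to nonzero constants, so their maximal degree is still $n$. Full row rank forces $C$ to have full row rank $g-q_2$. So pick $g-q_2$ columns of $C$ forming a nonsingular submatrix $C'$. The $g\times g$ minor of $UR$ built from these columns together with the $D$-columns is block triangular and equals $\pm\det C'\det D\neq0$. Hence $\deg\det D\leqslant\deg(\det C'\det D)\leqslant n$, which gives $d\coloneqq\dim\B_0\leqslant n$.

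\textbf{Step 3 (nilpotency).} $\B_0$ is shift invariant, so $\sigma$ restricts to a linear map on the $d$-dimensional space $\B_0$. Global reconstructibility and Step 1 give $\sigma^N|_{\B_0}=0$ for some $N$, so this restriction is nilpotent. A nilpotent operator on a $d$-dimensional space satisfies $\sigma^d|_{\B_0}=0$. Therefore every $\bs{v}\in\B_0$ vanishes for $t\geqslant d$, and by Step 1 $\bs{w}_2$ is globally $d$-reconstructible with $d\in\boldsymbol{n}$.

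The main obstacle is Step 2. Two facts must be handled carefully: that $\dim\ker D(\sigma)=\deg\det D$ holds on $\N$ (not just on $\mathbb{Z}$), and that the maximal minor degree $n$ is unchanged under the unimodular transformation. Both are standard: unimodular maps preserve kernels and maximal minors up to units, and the dimension claim follows by diagonalising $D$ via its Smith form.
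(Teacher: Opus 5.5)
Your proof is correct, and its core coincides with the paper's. Both arguments row-compress $R_2(\sigma)$ by a unimodular $U(\sigma)$ and bound $\deg\det D\leqslant n$ via the $g\times g$ minor $\pm\det C'\det D$, which is exactly the paper's $\det\Delta\det L_{0m}$ step.

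The two routes differ in how this degree is tied to the reconstruction delay:
\begin{itemize}
\item The paper uses monomicity, $\det\Delta=\sigma^N$, and takes $X$ with $X\Delta=\sigma^N I_{q_2}$. It then writes down the consistent DBO $\begin{bsmallmatrix}X&0\end{bsmallmatrix}\overline{R}(\sigma)$ and appeals to \Cref{prop: recons_monomic_DBOform}.
\item You characterise global $N$-reconstructibility intrinsically as $\sigma^N|_{\ker R_2(\sigma)}=0$. You then compute $\dim\ker R_2(\sigma)=\deg\det D$ on $\N$ and use that a nilpotent map on a $d$-dimensional space satisfies $\sigma^d=0$.
\end{itemize}

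Each route has its own advantage. Your version needs only full column rank of $R_2$: nilpotency comes straight from the hypothesis, so you never use that $\det D$ is a power of $\sigma$. It also makes transparent that the minimal delay is the nilpotency index of the shift on $\ker R_2(\sigma)$, which can be strictly smaller than $\deg\det D$ (e.g.\ $D=\sigma I_2$). The paper's version, in exchange, produces an explicit DBO $\sigma^N\widehat{\bs{w}}_2=P(\sigma)\bs{w}_1$ with $N\leqslant n$, which is the form Algorithm 1 later searches for.

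Two minor points:
\begin{itemize}
\item Say explicitly that $R$ is taken to be full row rank. This is harmless, since $\ker R_2(\sigma)=\{\bs{v}\mid(0,\bs{v})\in\B\}$ does not depend on the chosen representation.
\item The identity $\dim\ker D(\sigma)=\deg\det D$ is special to $\N$, not something inherited from $\mathbb{Z}$, so the parenthetical ``not just on $\mathbb{Z}$'' is inaccurate. On $\mathbb{Z}$ the factors $\sigma^k$ contribute nothing to the kernel, so under monomicity $\ker D(\sigma)$ would be trivial there.
\end{itemize}
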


\begin{proof} The `if' part trivially follows from \Cref{def:recons}; we now prove the `only if' part.
		Without loss of generality, let $R(\sigma)\in \R^{g\times q}[\sigma]$ be full row rank.
	If $\bs{w}_2$ is globally reconstructible from $\bs{w}_1$, then  $g\geqslant q_2$, and  $\exists~U(\sigma)\in \R^{g\times g}[\sigma]$ unimodular such that
	$U(\sigma)R(\sigma)=\begin{bsmallmatrix}
		L_1(\sigma)&\Delta(\sigma)\\
		L_0(\sigma)&0
	\end{bsmallmatrix}\eqqcolon\overline{R}(\sigma)$ where $\Delta(\sigma)\in \R^{q_2\times q_2}[\sigma]$ is square monomic with $\det \Delta(\sigma)=\sigma^N$ for $N\in \N$\cite[Section IV]{valcher_bisiacco2006behavioral}. Since $U(\sigma)$ is unimodular, $\B=\ker \overline{R}(\sigma)$ (see \cite[Proposition III.3]{willems1991paradigms}). Further, as $\Delta(\sigma)$ is monomic, $\exists~ X(\sigma)\in \R^{q_2\times q_2}[\sigma]$ such that $X(\sigma)\Delta(\sigma)=\sigma^NI_{q_2}$. Thus, by \Cref{prop: recons_monomic_DBOform}, $R_{\rm DBO}(\sigma) \coloneqq \begin{bsmallmatrix}
	X(\sigma)&~ 0_{q_2\times g-q_2}
\end{bsmallmatrix}\overline{R}(\sigma)$ provides a consistent DBO. Hence, it suffices to show that $N\leqslant n$. 
	Recall that the order of the system equals the highest degree among all $(g\times g)$ minors of $\overline{R}(\sigma)$. 
	When $g=q_2$, then $\overline{R}(\sigma)=\begin{bsmallmatrix}
		L_1(\sigma)&~\Delta(\sigma)
	\end{bsmallmatrix}$ and $\det \Delta(\sigma)=\sigma^N$ happens to be one of the $(g\times g)$ minors; hence $N\leqslant n$. On the other hand, when $g>q_2$, since $\overline{R}(\sigma)$ is full row rank, $L_0(\sigma)$ contains a non-singular $(g-q_2)\times (g-q_2)$ submatrix $L_{0m}(\sigma)$. Then, one of the $(g\times g)$ minors of $\overline{R}(\sigma)$ is $\det\Delta(\sigma)\det L_{0m}(\sigma)$.
 Since $\det \Delta(\sigma)=\sigma^N$ and $\det L_{0m}(\sigma)\neq 0$, it follows that $N\leqslant \deg\left(\det \Delta(\sigma)\det L_{0m}(\sigma)\right)\leqslant n$. 
\end{proof}

\begin{remark}\label{rem:irrelevant}
Throughout this letter, we partition the system variables into measured and relevant variables. Systems with additional irrelevant variables, such as unknown input in state estimation, can be reduced to this setting via projection. In particular, let $\B_{\rm full}$ contains $\bs{w}\coloneqq (\bs{w}_1,\bs{w}_2,\bs{w}_3)$ where we want DBOs of $\bs{w}_2$ from $\bs{w}_1$. Since $\bs{w}_3$ is irrelevant variable, it suffices to analyse $\B=\{(\bs{w}_1,\bs{w}_2)\mid ~\!\exists~\bs{w}_3\in (\R^{q_3})^\N \mbox{ such that } (\bs{w}_1, \bs{w}_2, \bs{w}_3)\in \B_{\rm full}\}$. However, in the model-based setting, it is usually non-trivial to carry out such a projection; therefore, in this setting, the analysis of DBO is often done in the presence of the irrelevant variable with $\B_{\rm full}$. Interestingly, in the data-driven approach, the projection of $\B_{\rm full}$ on $(\bs{w}_1, \bs{w}_2)$ can be trivially carried out by simply ignoring the data corresponding to the irrelevant variable $\bs{w}_3$. This justifies our omission of irrelevant variables in this letter.
\end{remark}
\section{Local reconstructibility}\label{sec:local_recons}
In this letter, we are interested in ascertaining global reconstructibility of $\bs{w}_2$ from $\bs{w}_1$ and obtaining a DBO for $\B$, both based only on a finite dataset. To this end, we introduce the notion of \emph{restricted behaviour}, defined as the set of trajectories of the given behaviour truncated to a finite time interval. The restricted behaviour is a finite-dimensional vector space and, therefore, admits a finite basis \cite{willems1991paradigms}. In this section, we relate global reconstructibility to a property of the restricted behaviour, which we term \emph{local reconstructibility}. This forms the basis of the finite data-driven approach in later sections.

In order to formalise the notion of restricted behaviours, let $k\in\N$ be arbitrary.
For any trajectory $\bs{w}=(\bs{w}_1,\bs{w}_2)\in (\R^q)^{\N}$, the symbol $\bs{w}|_k$, called {\em the $k$-restricted $\bs{w}$}, denotes the map $\bs{w}|_k:\boldsymbol{k}\rightarrow\R^q$
defined as	$
\bs{w}|_k(t)=(\bs{w}_1|_k(t), \bs{w}_2|_k(t))\coloneqq(\bs{w}_1(t),\bs{w}_2(t))\mbox{ for all }0\leqslant t \leqslant k.
$
Let $\B$ be the behaviour of an LTI system. The {\em $k$-restricted behaviour $\B_k$} is defined as
$\B_k\coloneqq \left\{\bs{w}|_k:\boldsymbol{k}\rightarrow\R^q \bigm| \bs{w}\in\B\right\}.
$
 By \Cref{prop: recons_monomic_DBOform}, $\bs{w}_2$ is globally reconstructible from $\bs{w}_1$ if and only if every $\bs{w}\in \B$ satisfies $\sigma^N\bs{w}_2=P(\sigma)\bs{w}_1$ where $N\in \N$ and $P(\sigma)\coloneqq\sum_{i=0}^{d}P_i\sigma^i,~d\in \N,~ P_i\in \R^{q_2\times q_1}$. Thus, we have $\bs{w}_2(N)=\sum_{i=0}^{d}P_i\bs{w}_1(i)$. This motivates the following definition.

\begin{definition}[$k$-local $N$-reconstructibility]\label{def:loc-recons}
	Consider an LTI system $\B$ described by \eqref{eqn:behaviour}. Let $k\in \N$ and $N\in \boldsymbol{k}$. We say that $\bs{w}_2$ is $k$-locally $N$-reconstructible from $\bs{w}_1$ if $(0,\bs{w}_2|_k)\in \B_k$ implies $\bs{w}_2(N)=0$.
\end{definition}

Let $k\in \N$ be arbitrary and $N\in \bs{k}$, then notice that global reconstructibility does not always imply $k$-local $N$-reconstructibility. Indeed, consider the following example.
\begin{example}\label{eg:no_const_DBO}
	Consider the behaviour
	$			\B=\left\{\!\!	\begin{bsmallmatrix}
					\bs{x}_1\\\bs{x}_2\\\bs{x}_3
				\end{bsmallmatrix}\in (\R^3)^\N \bigm| \begin{bsmallmatrix}
					\sigma^2-1&0&\quad	\sigma(\sigma+1)\\
					0&\sigma^2+1&\quad	\sigma(\sigma-1)
				\end{bsmallmatrix}		\begin{bsmallmatrix}
					\bs{x}_1\\\bs{x}_2\\\bs{x}_3
				\end{bsmallmatrix}	=0 \!\right\}$
	with the partition $\bs{w}_1=(\bs{x}_1,\bs{x}_2)$ and $\bs{w}_2=\bs{x}_3$.
	Since $\begin{bsmallmatrix}
		\sigma(\sigma+1)\\
		\sigma(\sigma-1)
	\end{bsmallmatrix}$ is right monomic, by \Cref{prop: recons_monomic_DBOform}, $\bs{w}_2$ is globally reconstructible from $\bs{w}_1$. Indeed, a (consistent) DBO is given by
	$\begin{bsmallmatrix}
		0.5(\sigma^2-1)&~~-0.5(\sigma^2-1)&~~\sigma
	\end{bsmallmatrix}\begin{bsmallmatrix}
		\bs{w}_1\\\widehat{\bs{w}}_2
	\end{bsmallmatrix}=0$. 
However, note that $\bs{w}\in \B$ with $\bs{w}_1=
\left\{
\begin{psmallmatrix}0 \\ 0\end{psmallmatrix},
\begin{psmallmatrix}0 \\ 0\end{psmallmatrix},
\begin{psmallmatrix}-2 \\ 0\end{psmallmatrix},
\begin{psmallmatrix}-2 \\ 0\end{psmallmatrix},\ldots
\right\}$ and $\bs{w}_2=\{1,1,1,1,\ldots\}$. Thus, by \Cref{def:loc-recons}, for $k=0,1$, $\bs{w}_2$ is not $k$-locally $N$-reconstructible from $\bs{w}_1$ for any $N\in \bs{k}$.
\end{example} 

Nonetheless, the converse is always true (\Cref{lem:eventoglob}). First, we prove that $k$-local $N$-reconstructibility implies the existence of a local DBO. This plays a pivotal role in the sequel.

\begin{lemma}\label{lem:local_DBO}
	Consider an LTI system $\B$ described by \eqref{eqn:behaviour}. Let $k\in \N$ and $N\in \bs{k}$. Then, $\bs{w}_2$ is $k$-locally $N$-reconstructible from $\bs{w}_1$ if and only if there exists a matrix $R_{\rm LDBO}\in \R^{q_2\times (q_1(k+1))}$ such that every $\bs{w}=(\bs{w}_1,\bs{w}_2)\in \B$ satisfies $\bs{w}_2(N)=R_{\rm LDBO}\mathrm{col}(\bs{w}_1(0),\bs{w}_1(1),\ldots,\bs{w}_1(k))$.
\end{lemma}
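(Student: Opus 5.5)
The plan is to treat $\B_k$ as a finite-dimensional linear subspace of $(\R^{q})^{k+1}$ and recast the statement as a fact about two linear maps defined on it. Identify each $\bs{w}|_k\in\B_k$ with the stacked vector $\mathrm{col}(\bs{w}(0),\ldots,\bs{w}(k))$. Define on $\B_k$ the linear maps
\[
\Phi:\bs{w}|_k\mapsto \mathrm{col}(\bs{w}_1(0),\ldots,\bs{w}_1(k))\in\R^{q_1(k+1)},\qquad \Psi:\bs{w}|_k\mapsto \bs{w}_2(N)\in\R^{q_2}.
\]
Both are well defined because $N\leqslant k$. Since $\B$ is linear, $\B_k$ is a linear subspace, and both maps are linear.

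In this language, $k$-local $N$-reconstructibility (\Cref{def:loc-recons}) says exactly that $\ker\Phi\subseteq\ker\Psi$. The points $(0,\bs{w}_2|_k)\in\B_k$ are precisely the elements of $\ker\Phi$, and the conclusion $\bs{w}_2(N)=0$ says $\Psi$ vanishes on them. The existence of $R_{\rm LDBO}$ in the lemma says that $\Psi=R_{\rm LDBO}\circ\Phi$ on $\B_k$. This is equivalent to the statement for all $\bs{w}\in\B$, because both sides depend only on $\bs{w}|_k$.

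The \emph{if} direction is immediate. If $\Psi=R_{\rm LDBO}\Phi$, then $\Phi(\bs{w}|_k)=0$ forces $\Psi(\bs{w}|_k)=0$, so $(0,\bs{w}_2|_k)\in\B_k$ implies $\bs{w}_2(N)=0$.

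For the \emph{only if} direction I use the standard factorisation lemma: if $\ker\Phi\subseteq\ker\Psi$ on a finite-dimensional space, then $\Psi$ factors through $\Phi$. Concretely, I define a linear map $\tilde R$ on $\operatorname{im}\Phi$ by $\tilde R(\Phi(v))\coloneqq\Psi(v)$. This is well defined: if $\Phi(v)=\Phi(v')$, then $v-v'\in\ker\Phi\subseteq\ker\Psi$, so $\Psi(v)=\Psi(v')$. Linearity of $\tilde R$ follows from linearity of $\Phi$ and $\Psi$. Next I extend $\tilde R$ to all of $\R^{q_1(k+1)}$ by choosing a complement of $\operatorname{im}\Phi$ and setting the map to zero there. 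The matrix of this extension is $R_{\rm LDBO}\in\R^{q_2\times q_1(k+1)}$, and it satisfies the required identity for every $\bs{w}\in\B$.

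An explicit alternative is to pick a basis of $\B_k$ (finite-dimensional, as noted in the paper) and collect it into a matrix $\begin{bsmallmatrix}W_1\\W_2\end{bsmallmatrix}$. Here $W_1$ stacks the $\bs{w}_1$-samples and $W_2^{N}$ denotes the rows giving $\bs{w}_2(N)$. The kernel inclusion is then the row-space condition $\operatorname{rowspace}W_2^{N}\subseteq\operatorname{rowspace}W_1$. This gives $R_{\rm LDBO}$ with $W_2^N=R_{\rm LDBO}W_1$, for instance $R_{\rm LDBO}=W_2^NW_1^{\dagger}$; this form will be convenient later for the data-driven construction.

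There is no real obstacle in this argument. The only points needing care are checking that the condition ``for all $\bs{w}\in\B$'' reduces to a condition on $\B_k$, and that the well-definedness step uses linearity of $\B_k$, which it inherits from $\B$.
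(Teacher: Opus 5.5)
Your proposal is correct and matches the paper's proof. The paper takes a basis matrix $\begin{bsmallmatrix}V_1\\V_2\end{bsmallmatrix}$ of $\B_k$, reads local reconstructibility as $\ker V_1\subseteq\ker V_{2,N}$, and dualises this to $\mathrm{rowspan}\,V_{2,N}\subseteq\mathrm{rowspan}\,V_1$, which is exactly your ``explicit alternative''; your factorisation of $\Psi$ through $\Phi$ is the same fact stated without coordinates, and your choice $R_{\rm LDBO}=W_2^NW_1^{\dagger}$ is valid because $W_2^N=ZW_1$ gives $W_2^NW_1^{\dagger}W_1=ZW_1=W_2^N$.
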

 \begin{proof}
 (If) Assume every $\bs{w}\in \B$ satisfies $\bs{w}_2(N)=R_{\rm LDBO}\mathrm{col}(\bs{w}_1(0),\bs{w}_1(1),\ldots,\bs{w}_1(k))$. Thus, for all $\bs{w}\in \B$, $\bs{w}_1|_k=0$ implies $\bs{w}_2(N)=0$. By \Cref{def:loc-recons}, $\bs{w}_2$ is $k$-locally $N$-reconstructible from $\bs{w}_1$. 

 (Only if)  Note that $\B_k$ is a subspace of $\R^{q(k+1)}$ since $\bs{w}|_k=(\bs{w}_1|_k,\bs{w}_2|_k)$ can be represented as $\mathrm{col}(\bs{w}_1(0),\bs{w}_1(1),\ldots,\bs{w}_1(k),\bs{w}_2(0),\ldots,\bs{w}_2(k))\in \R^{q(k+1)}$. Let $d_k=\dim \B_k$. Then, there exist $V_1\in \R^{q_1(k+1)\times d_k}$ and $V_2\in \R^{q_2(k+1)\times d_k}$ such that $\B_k=\mathrm{colspan}\begin{bsmallmatrix}
 	V_1\\V_2
 \end{bsmallmatrix}$.
 Let $V_{2,N}\in \R^{q_2\times d_k}$ be the submatrix of $V_2$ formed by rows with row index $i$ satisfying $(q_2N+1)\leqslant i\leqslant (q_2N+q_2)$. Then, for all $\bs{w}\in \B$, $\bs{w}_2(N)\in \mathrm{Im}V_{2,N}$. 
 If $\bs{w}_2$ is  $k$-locally $N$-reconstructible from $\bs{w}_1$, then for all $\bs{w}\in \B$, $\bs{w}_1|_{k}=0$ implies $\bs{w}_2(N)=0$. Equivalently,
 $\ker V_1\subseteq \ker V_{2,N}$ and hence $\mathrm{rowspan} V_1\supseteq \mathrm{rowspan} V_{2,N}$. Thus, there exists $R_{\rm LDBO}\in \R^{q_2\times (q_1(k+1))}$ such that 
 $ V_{2,N}=R_{\rm LDBO}V_1$. Hence, the claim follows.
 \end{proof}	

By \Cref{lem:local_DBO}, the presence of the local DBO implies that if $\bs{w}_2$ is $\tau$-locally $N$-reconstructible from $\bs{w}_1$ then $\bs{w}_2$ is $k$-locally $N$-reconstructible from $\bs{w}_1$ for all $k\geqslant \tau$. This motivates us to define eventual local $N$-reconstructibility below.

\begin{definition}[Eventual local $N$-reconstructibility]\label{def:even-loc-recons}
	Let $\B$ be an LTI system and let $N\in \N$. We say $\bs{w}_2$ is eventually locally $N$-reconstructible from $\bs{w}_1$ if $\exists~\tau \in \N$ such that $N\in \bs{\tau}$ and $\bs{w}_2$ is $k$-locally $N$-reconstructible from $\bs{w}_1$ for all $k\geqslant \tau$.
\end{definition}

\begin{lemma}\label{lem:eventoglob}
	Let $\B$ be an LTI system described by \eqref{eqn:behaviour}. Then, for $N\in \N$, $\bs{w}_2$ is eventually locally $N$-reconstructible from $\bs{w}_1$ if and only if $\bs{w}_2$ is globally reconstructible from $\bs{w}_1$.
\end{lemma}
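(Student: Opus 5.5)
The plan is to prove the two directions separately. The tools are \Cref{lem:local_DBO}, which converts local reconstructibility into a linear read-out formula, and the shift invariance of $\B$, meaning $\sigma\B\subseteq\B$, which propagates that formula along the time axis. I read the statement as: $\bs{w}_2$ is eventually locally $N$-reconstructible for \emph{some} $N\in\N$ if and only if $\bs{w}_2$ is globally reconstructible. In fact the argument below shows that eventual local $N$-reconstructibility implies global $N$-reconstructibility. Conversely, global $N_0$-reconstructibility implies eventual local $N$-reconstructibility for every $N\geqslant N_0$.

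\emph{Eventually local $\Rightarrow$ global.} Suppose $\bs{w}_2$ is eventually locally $N$-reconstructible. Then it suffices to fix a single $k\geqslant\tau$, so that $\bs{w}_2$ is $k$-locally $N$-reconstructible with $N\in\bs{k}$. By \Cref{lem:local_DBO} there is $R_{\rm LDBO}$ such that every $\bs{w}\in\B$ satisfies $\bs{w}_2(N)=R_{\rm LDBO}\,\mathrm{col}(\bs{w}_1(0),\ldots,\bs{w}_1(k))$. Since $\B$ is shift invariant, $\sigma^t\bs{w}\in\B$ for every $t\in\N$. Applying the identity to $\sigma^t\bs{w}$ gives
\[
\bs{w}_2(N+t)=R_{\rm LDBO}\,\mathrm{col}(\bs{w}_1(t),\ldots,\bs{w}_1(t+k))\quad\text{for all } t\in\N.
\]
Hence if $(\bs{w}_1,\bs{w}_2),(\bs{w}_1,\overline{\bs{w}}_2)\in\B$, their difference in the second component vanishes for all times $\geqslant N$. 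By \Cref{def:recons}, $\bs{w}_2$ is globally $N$-reconstructible, and in particular globally reconstructible.

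\emph{Global $\Rightarrow$ eventually local.} By \Cref{prop: recons_monomic_DBOform}, and optionally \Cref{lem:Nmax} to take $N_0\leqslant n$, there exist $N_0$ and $P(\sigma)=\sum_{i=0}^{d}P_i\sigma^i$ such that every $\bs{w}\in\B$ satisfies $\sigma^{N_0}\bs{w}_2=P(\sigma)\bs{w}_1$. Evaluating at time $t$ gives $\bs{w}_2(N_0+t)=\sum_{i=0}^{d}P_i\bs{w}_1(i+t)$. Fix any $N\geqslant N_0$, put $t=N-N_0$, and set $\tau\coloneqq\max(N,\,N-N_0+d)$. For every $k\geqslant\tau$ we have $N\in\bs{k}$, and the value $\bs{w}_2(N)$ is a linear combination of $\bs{w}_1(0),\ldots,\bs{w}_1(k)$. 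Therefore $(0,\bs{w}_2|_k)\in\B_k$ forces $\bs{w}_2(N)=0$, which is $k$-local $N$-reconstructibility for all $k\geqslant\tau$. This is exactly \Cref{def:even-loc-recons}.

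The main obstacle is the quantifier on $N$ rather than any computation. The converse direction only yields eventual local $N$-reconstructibility for $N$ at least the convergence index $N_0$ of some consistent DBO. For smaller $N$ the property may fail, since the value of $\bs{w}_2$ at early times need not be determined by $\bs{w}_1$. So I would state and prove the equivalence with $N$ existentially quantified, or for all sufficiently large $N$. \Cref{lem:Nmax} then guarantees that $N=n$ always suffices.
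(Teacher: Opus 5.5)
Your proof is correct and follows the paper's route. The global-to-local direction reads $\bs{w}_2(N)$ off the consistent DBO of \Cref{prop: recons_monomic_DBOform}, and the local-to-global direction combines \Cref{lem:local_DBO} with shift invariance; the only difference is that you conclude global $N$-reconstructibility directly from \Cref{def:recons} rather than via the DBO $\sigma^N\widehat{\bs{w}}_2=R_{\rm DBO}(\sigma)\bs{w}_1$ and \Cref{prop: recons_monomic_DBOform}.

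Your reading of the quantifier on $N$ is the one the paper's proof actually establishes, since it uses the $N$ of the DBO, and that reading is necessary. In Example~\ref{eg:no_const_DBO}, $(0,\bs{w}_2)$ with $\bs{w}_2=(1,0,0,\ldots)$ lies in $\B$. So eventual local $0$-reconstructibility fails there even though $\bs{w}_2$ is globally reconstructible.
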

\begin{proof}
	(If) Assume that $\bs{w}_2$ is globally reconstructible from $\bs{w}_1$. 
	By \Cref{prop: recons_monomic_DBOform}, $\exists~N\in \N$ such that for $P(\sigma)=\sum_{i=0}^{d}P_i\sigma^i,~ P_i\in \R^{q_2\times q_1}$, the system $\begin{bsmallmatrix}
		-P(\sigma)&~\sigma^NI_{q_2}
	\end{bsmallmatrix}\begin{bsmallmatrix}
		\bs{w}_1\\\widehat{\bs{w}}_2
	\end{bsmallmatrix}=0$ forms a consistent DBO for $\B$.
	Thus, every $\bs{w}\in \B$ satisfies $\bs{w}_2(N)=\sum_{i=0}^{d}P_i\bs{w}_1(i)$. Hence, for $\tau=\max\{d,N\}$, $\bs{w}_2$ is $k$-locally $N$-reconstructible from $\bs{w}_1$ for all $k\geqslant \tau$.
	
	(Only if)
	 By \Cref{lem:local_DBO}, eventual local $N$-reconstructibility implies existence of a local DBO for some $k\in \N$ with $R_{\rm LDBO}\coloneqq \begin{bsmallmatrix}
		r_0&r_1&\cdots&r_k
	\end{bsmallmatrix}, ~r_i\in \R^{q_2\times q_1}$ for $i\in \bs{k}$. Then, due to time-invariance, every $\bs{w}\in \B$ satisfies $\bs{w}_2(t+N)=R_{\rm LDBO}\mathrm{col}(\bs{w}_1(t),\bs{w}_1(t+1),\ldots,\bs{w}_1(t+k))$ for all $t\in \N$.
 Define $R_{\rm DBO}(\sigma)=\sum_{i=0}^{k}r_i\sigma^i\in \R^{q_2\times q_1}[\sigma]$.
 Then, $\sigma^N\widehat{\bs{w}}_2=R_{\rm DBO}(\sigma)\bs{w}_1$ forms DBO for $\B$. Hence, by \Cref{prop: recons_monomic_DBOform}, $\bs{w}_2$ is globally reconstructible from $\bs{w}_1$.
\end{proof}

\Cref{lem:eventoglob} allows us to confirm global reconstructibility from $k$-local $N$-reconstructibility for some $k\in \N$, $N\in \bs{k}$.
 However, since an appropriate $k\in \N$ is not known, the failure of $k$-local $N$-reconstructibility for some $k$ does not imply failure of global reconstructibility.
In \Cref{lem:kmax} we provide, for a given $N\in \N$, a threshold on $k$ such that failure of $k$-local $N$-reconstructibility would imply failure of global $N$-reconstructibility. Then, in \Cref{thm:recons}, combining \Cref{lem:Nmax,lem:kmax}, we provide a general threshold on $k$.  

 Before presenting these results, we introduce some preliminaries. We first describe a \emph{driving-variable representation} (a DV representation) of $\B$\cite[Definition 3.1]{valcher2003note}. For $m,n, q_1, q_2\in \N$, let $\bs{x}\in (\R^n)^{\N}$, $\bs{u}\in (\R^m)^\N$ and $\bs{w}_i(t)\in \R^{q_i},~ i=1,2$ denote state, driving input and output vectors, respectively. The following representation
 \begin{align}\label{eq:driving_var}
 	\begin{split}
 		\bs{x}(t+1)&=A\bs{x}(t)+B\bs{u}(t)\\
 		\bs{w}_i(t)&=C_i\bs{x}(t)+D_i\bs{u}(t), \mbox{ for } i=1,2
 	\end{split}
 \end{align}
 with $A, B, C_1, C_2, D_1, D_2$ real matrices of suitable dimensions, is a DV representation of $\B$ if $\B=\left\{(\bs{w}_1, \bs{w}_2)\in (\R^{q_1+q_2})^\N \mid \exists~ \bs{x}\in (\R^n)^{\N},~ \bs{u}\in (\R^m)^\N \mbox{ s.t. }\right.$ $\left. (\bs{w}_1, \bs{w}_2, \bs{x}, \bs{u}) \mbox{ satisfies } \eqref{eq:driving_var}\right\}$. A minimal DV representation always exists with $m$ equal to the input cardinality and $n$ the order of $\B$ \cite[Section 3]{valcher2003note}; here, we use such a representation.
 
 Next, we define the \textit{weakly unobservable subspace} $\mathcal{V}_1\subseteq \R^n$ as follows\cite[Definition 7.8]{trentlemanbook}: $\mathcal{V}_1=\big\{\bs{x}(0)\in \R^n  \bigm| ~\exists~\bs{u}\in (\R^m)^\N\mbox{ s.t. } \bs{w}_1\equiv0 \mbox{ as per \eqref{eq:driving_var}} \big\}$.
In \Cref{prop:weaksub}, we give a matrix-based description of $\mathcal{V}_1$ used in the proof of \Cref{lem:kmax}. The proof is omitted as it follows from \cite[Lemma 1]{NaveenDDP}. Define for $k\in \N$, $\mathbb{M}_{1,k}\in \R^{q_1(k+1)\times (n+(k+1)m)}$ in the proposition as 
\begin{align}\label{eq:M-matrix}
	\mathbb{M}_{1,k}\!&=\!\scalebox{0.8}{$\left[\begin{array}{@{}l l l l l l @{}}
			C_1 & D_1 & 0 & 0 & \cdots & 0\\
			C_1A&C_1B&D_1&0&\cdots&0\\
			C_1A^2&C_1AB&C_1B&D_1&\cdots&0\\
			\vdots&\vdots&\vdots&\vdots&\ddots&\vdots\\
			C_1A^k&C_1A^{k-1}B&C_1A^{k-2}B&C_1A^{k-3}B&\cdots&D_1\\
		\end{array}\right]$}
\end{align}
 \begin{proposition}\label{prop:weaksub}
 	Consider an LTI system described by \eqref{eq:driving_var}. Then, the subspace $\mathcal{V}_1$ defined above is given by 
$ \mathcal{V}_1=\big\{\bs{x}(0)\in \R^n  \bigm| ~\exists~\bs{u}(0),\bs{u}(1),\ldots, \bs{u}(n-1)\mbox{ such that }$ $\mathbb{M}_{1,n-1}\mathrm{col}(
 			\bs{x}(0),\bs{u}(0),\bs{u}(1),\ldots,\bs{u}(n-1))=0\big\}.$
 \end{proposition}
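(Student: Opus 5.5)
The plan is to use the standard fixed-point (subspace recursion) characterisation of the weakly unobservable subspace, adapted to the matrices $\mathbb{M}_{1,k}$. For $k\in\N$, define
$\mathcal{W}_0\coloneqq\R^n$ and, for $k\geqslant 1$,
$\mathcal{W}_k\coloneqq\{\bs{x}(0)\in\R^n\mid \exists~\bs{u}(0),\ldots,\bs{u}(k-1)\mbox{ such that }\mathbb{M}_{1,k-1}\mathrm{col}(\bs{x}(0),\bs{u}(0),\ldots,\bs{u}(k-1))=0\}$.
By the structure of \eqref{eq:M-matrix}, $\mathcal{W}_k$ is the set of initial states from which some input keeps $\bs{w}_1(0)=\cdots=\bs{w}_1(k-1)=0$ according to \eqref{eq:driving_var}. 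The statement to prove is exactly $\mathcal{V}_1=\mathcal{W}_n$.

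First, I would establish two elementary facts. The first is that each $\mathcal{W}_k$ is a subspace, being a projection of a kernel. The second is that the sets are nested: since $\mathbb{M}_{1,k-1}$ is the upper-left block of $\mathbb{M}_{1,k}$ (rows and columns are appended lower-triangularly), we have $\mathcal{W}_{k+1}\subseteq\mathcal{W}_k$. Next, I would prove the one-step recursion
\begin{align*}
\mathcal{W}_{k+1}=\{\bs{x}\in\R^n\mid \exists~\bs{u}\in\R^m \mbox{ such that } C_1\bs{x}+D_1\bs{u}=0,~A\bs{x}+B\bs{u}\in\mathcal{W}_k\}.
\end{align*}
This follows by time-invariance: the rows of $\mathbb{M}_{1,k}$ from the second block onward, applied to $\mathrm{col}(\bs{x}(0),\bs{u}(0),\ldots,\bs{u}(k))$, equal $\mathbb{M}_{1,k-1}\mathrm{col}(A\bs{x}(0)+B\bs{u}(0),\bs{u}(1),\ldots,\bs{u}(k))$.

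From the recursion, if $\mathcal{W}_{j+1}=\mathcal{W}_j$ for some $j$, then the right-hand sides defining $\mathcal{W}_{j+2}$ and $\mathcal{W}_{j+1}$ coincide, so $\mathcal{W}_{j+2}=\mathcal{W}_{j+1}$, and by induction the chain is constant from $j$ on. Since $\dim\mathcal{W}_0=n$ and every strict inclusion lowers the dimension by at least one, there can be at most $n$ strict inclusions. Hence $\mathcal{W}_{n+1}=\mathcal{W}_n$ and $\mathcal{W}_k=\mathcal{W}_n$ for all $k\geqslant n$.

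Finally, I would show $\mathcal{V}_1=\mathcal{W}_n$. The inclusion $\mathcal{V}_1\subseteq\mathcal{W}_n$ is immediate by truncating an input that gives $\bs{w}_1\equiv 0$. For the converse, take $\bs{x}(0)\in\mathcal{W}_n=\mathcal{W}_{n+1}$. By the recursion there is $\bs{u}(0)$ with $\bs{w}_1(0)=0$ and $\bs{x}(1)=A\bs{x}(0)+B\bs{u}(0)\in\mathcal{W}_n$. Repeating this step inductively constructs an infinite input $\bs{u}$ with $\bs{w}_1\equiv 0$, so $\bs{x}(0)\in\mathcal{V}_1$.

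The main obstacle I expect is stating the recursion identity carefully, since the block bookkeeping in \eqref{eq:M-matrix} must match the shifted state. Once that identity holds, the stabilisation and the infinite-horizon construction are routine.
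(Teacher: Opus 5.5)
Your proof is correct and complete. The block-shift identity for $\mathbb{M}_{1,k}$ checks out, as does the resulting one-step recursion $\mathcal{W}_{k+1}=\{\bs{x}\mid \exists\,\bs{u}:\ C_1\bs{x}+D_1\bs{u}=0,\ A\bs{x}+B\bs{u}\in\mathcal{W}_k\}$. The stabilisation argument is also sound: strict inclusions in $\mathcal{W}_0\supseteq\cdots\supseteq\mathcal{W}_{n+1}$ can only occur on an initial segment, and there are at most $n$ of them. Finally, the inductive construction of an infinite output-nulling input is valid.

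The paper gives no argument here; it simply defers to \cite[Lemma 1]{NaveenDDP}. So the difference is self-containedness rather than a different mathematical idea. Your recursion is the standard mechanism behind the recursive algorithm for weakly unobservable subspaces in \cite{trentlemanbook}, and writing it out buys two things the citation does not show.

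First, it verifies directly, for a DV representation with possibly nonzero $D_1$, the ``recursive nature of $\mathbb{M}_{1,k}$'' that the proof of \Cref{lem:kmax} also relies on.

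Second, it gives $\mathcal{V}_1=\mathcal{W}_k$ for every $k$ beyond the first index $j$ with $\mathcal{W}_{j+1}=\mathcal{W}_j$. This is exactly what justifies the smaller depth $d^*$ claimed in \Cref{rem:worstcase}. For example, if $D_1$ has full row rank then $\mathcal{W}_1=\R^n=\mathcal{W}_0$, so $\mathcal{V}_1=\mathcal{W}_1$ and depth $d=0$ suffices.

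One cosmetic point: rename your $\mathcal{W}_k$, since the paper later uses $\mathcal{W}_1,\mathcal{W}_2$ for the data partitions.
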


\begin{lemma}\label{lem:kmax}
	Let $\B$ be an LTI system described by \eqref{eqn:behaviour} with order $n$. Let $N\in \N$. If $\bs{w}_2$ is globally $N$-reconstructible from $\bs{w}_1$ then $\bs{w}_2$ is $(N+n)$-locally $N$-reconstructible from $\bs{w}_1$.
\end{lemma}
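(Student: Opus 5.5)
The plan is to use a minimal driving-variable representation \eqref{eq:driving_var} of $\B$ with state dimension $n$, together with the description of the weakly unobservable subspace $\mathcal{V}_1$ in \Cref{prop:weaksub}. We build a new trajectory of $\B$ whose $\bs{w}_1$-component vanishes identically and whose value of $\bs{w}_2$ at time $N$ coincides with that of the given trajectory. Global $N$-reconstructibility then forces this value to be zero.

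\emph{Step 1 (setup).} Fix $\bs{w}=(\bs{w}_1,\bs{w}_2)\in\B$ with $\bs{w}_1|_{N+n}=0$, i.e.\ $\bs{w}_1(t)=0$ for $0\leqslant t\leqslant N+n$. By \Cref{def:loc-recons} we must show $\bs{w}_2(N)=0$. Choose $\bs{x},\bs{u}$ such that $(\bs{w}_1,\bs{w}_2,\bs{x},\bs{u})$ satisfies \eqref{eq:driving_var}.

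\emph{Step 2 (the state at time $N+1$ is weakly unobservable).} Since $\bs{w}_1(t)=0$ for the $n$ consecutive instants $t=N+1,\ldots,N+n$, time-invariance gives
\[
\mathbb{M}_{1,n-1}\,\mathrm{col}\bigl(\bs{x}(N+1),\bs{u}(N+1),\ldots,\bs{u}(N+n)\bigr)=0 .
\]
By \Cref{prop:weaksub}, this means $\bs{x}(N+1)\in\mathcal{V}_1$. Hence, by the definition of $\mathcal{V}_1$, there is an input $\bs{u}'\in(\R^m)^{\N}$ such that the state trajectory $\bs{x}'$ started at $\bs{x}'(0)=\bs{x}(N+1)$ and driven by $\bs{u}'$ produces output $\bs{w}_1'\equiv0$. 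If $n=0$, $\mathcal{V}_1=\R^0$ and this step is trivial.

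\emph{Step 3 (concatenation).} Define $\tilde{\bs{u}}(t)=\bs{u}(t)$ for $t\leqslant N$ and $\tilde{\bs{u}}(t)=\bs{u}'(t-N-1)$ for $t\geqslant N+1$. Likewise define $\tilde{\bs{x}}(t)=\bs{x}(t)$ for $t\leqslant N+1$ and $\tilde{\bs{x}}(t)=\bs{x}'(t-N-1)$ for $t\geqslant N+1$; these agree at $t=N+1$. Then $(\tilde{\bs{x}},\tilde{\bs{u}})$ satisfies the state equation for all $t$, because the two pieces are glued at the common state $\bs{x}(N+1)$. The resulting outputs $(\tilde{\bs{w}}_1,\tilde{\bs{w}}_2)$ therefore lie in $\B$. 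Moreover:
\begin{itemize}
\item for $t\leqslant N$, $\tilde{\bs{w}}_1(t)=\bs{w}_1(t)=0$;
\item for $t\geqslant N+1$, $\tilde{\bs{w}}_1(t)=\bs{w}_1'(t-N-1)=0$;
\item $\tilde{\bs{w}}_2(N)=C_2\bs{x}(N)+D_2\bs{u}(N)=\bs{w}_2(N)$.
\end{itemize}
Thus $\tilde{\bs{w}}_1\equiv0$ while $\bs{w}_2(N)$ is preserved.

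\emph{Step 4 (conclusion).} Both $(0,\tilde{\bs{w}}_2)$ and $(0,0)$ belong to $\B$. Global $N$-reconstructibility (\Cref{def:recons}) then yields $\tilde{\bs{w}}_2(t)=0$ for all $t\geqslant N$, in particular $\bs{w}_2(N)=\tilde{\bs{w}}_2(N)=0$.

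The main obstacle is choosing the gluing time correctly. Gluing at $\bs{x}(N)$ would replace $\bs{u}(N)$ by a different input value and could change $\bs{w}_2(N)$ through $D_2$. Gluing at $N+1$ avoids this, and it is exactly here that the window length $N+n$ is needed: the instants $N+1,\ldots,N+n$ supply the $n$ zero outputs required by \Cref{prop:weaksub}.
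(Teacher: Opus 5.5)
Your proof is correct and follows essentially the same route as the paper. Both arguments use a minimal driving-variable representation, deduce $\bs{x}(N+1)\in\mathcal{V}_1$ from the zero outputs at $N+1,\ldots,N+n$ via \Cref{prop:weaksub}, and glue inputs at time $N+1$ to get a trajectory with $\tilde{\bs{w}}_1\equiv 0$ and unchanged $\bs{w}_2(N)$; the only difference is that you argue directly while the paper argues by contrapositive, which is immaterial.
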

\begin{proof} Assume that $\bs{w}_2$ is not $(N+n)$-locally $N$-reconstructible from $\bs{w}_1$. By \Cref{def:loc-recons}, $\exists~\bs{w}\in \B$ (i.e., $\bs{w}|_{N+n}\in \B_{N+n}$) such that $\bs{w}_1|_{N+n}\equiv 0$ but $\bs{w}_2(N)\neq 0$. By \Cref{def:recons}, it is enough to construct $\widetilde{\bs{w}}=(\widetilde{\bs{w}}_1,\widetilde{\bs{w}}_2)\in \B$ satisfying $\widetilde{\bs{w}}_1\equiv0$ while $\widetilde{\bs{w}}_2(t)\neq 0$ for some $t\geqslant N$.
 Using $\mathbb{M}_{1,k}$ in \eqref{eq:M-matrix}, and the DV representation in \eqref{eq:driving_var}, we have,
\begin{align}
		&\scalebox{0.9}{$\bs{w}_1|_{N+n}\!=\mathbb{M}_{1,N+n}\mathrm{col}(
		\bs{x}(0),\bs{u}(0),\bs{u}(1),\ldots,\bs{u}(N+n))\!=\!0$}\label{eq:notloc1}\\
	&	\bs{w}_2(N)=\begin{bsmallmatrix}
			C_2A^{N}&C_2A^{N-1}B&\cdots&C_2B&D_2
		\end{bsmallmatrix}\begin{bsmallmatrix}
		\bs{x}(0)\\\bs{u}(0)\\\vdots\\\bs{u}(N)
	\end{bsmallmatrix}\neq 0\label{eq:notloc2}
\end{align}
Note that $\bs{x}(N+1)=A^{N+1}\bs{x}(0)+A^{N}B\bs{u}(0)+\ldots+B\bs{u}(N)$. By the recursive nature of $\mathbb{M}_{1,k}$ and \eqref{eq:notloc1}, we have
 	$\mathbb{M}_{1,n-1}\mathrm{col}\left(\bs{x}(N+1), \bs{u}(N+1), \bs{u}(N+2),\ldots, \bs{u}(N+n) \right)$
 		$=\mathrm{col}(\bs{w}_1(N+1),\bs{w}_1(N+2),\ldots, \bs{w}_1(N+n))=0.$
 By \Cref{prop:weaksub}, it follows that $\bs{x}(N+1)\in \mathcal{V}_1$. Consequently, by the definition of $\mathcal{V}_1$, there exists input sequence $\widetilde{\bs{u}}\in (\R^m)^\N$ such that $\bs{w}_1\equiv 0$ as per \eqref{eq:driving_var} starting from initial condition $\bs{x}(N+1)$. 
 Define a trajectory $\widetilde{\bs{w}}=(\widetilde{\bs{w}}_1, \widetilde{\bs{w}}_2)\in \B$ via \eqref{eq:driving_var} with initial state $\bs{x}(0)$ and the driving input $\{\bs{u}(0), \bs{u}(1), \ldots,\bs{u}(N), \widetilde{\bs{u}}(0),\widetilde{\bs{u}}(1), \widetilde{\bs{u}}(2) \ldots\}$. Then, it follows that $\widetilde{\bs{w}}_1\equiv 0$, while $\widetilde{\bs{w}}_2(N)\neq 0$ by \eqref{eq:notloc2}. 
\end{proof}

\Cref{lem:kmax} states that if $k\geq n$ and $N\in [0,k-n]$, $\bs{w}_2$ is not $k$-locally $N$-reconstructible from $\bs{w}_1$, then $\bs{w}_2$ is not globally $N$-reconstructible from $\bs{w}_1$. By \Cref{lem:Nmax}, we have $N\leqslant n$. Combining it with \Cref{lem:kmax}, we obtain the following result.

\begin{theorem}\label{thm:recons}
	Let $\B$ be an LTI system given by \eqref{eqn:behaviour} of order $n$.  Then, $\bs{w}_2$ is globally reconstructible from $\bs{w}_1$ if and only if for $N\in \boldsymbol{n}$, $\bs{w}_2$ is $2n$-locally $N$-reconstructible from $\bs{w}_1$.	
\end{theorem}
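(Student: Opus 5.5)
The statement reads as "globally reconstructible iff there exists $N\in\boldsymbol{n}$ such that $\bs{w}_2$ is $2n$-locally $N$-reconstructible from $\bs{w}_1$". I would prove the two directions separately, using Lemma~\ref{lem:Nmax} and Lemma~\ref{lem:kmax} for necessity, and Lemma~\ref{lem:local_DBO} with Lemma~\ref{lem:eventoglob} for sufficiency.

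\emph{(Only if).} Assume $\bs{w}_2$ is globally reconstructible from $\bs{w}_1$.
\begin{itemize}
\item By Lemma~\ref{lem:Nmax}, there is some $N\in\boldsymbol{n}$ such that $\bs{w}_2$ is globally $N$-reconstructible.
\item By Lemma~\ref{lem:kmax}, $\bs{w}_2$ is then $(N+n)$-locally $N$-reconstructible.
\item By Lemma~\ref{lem:local_DBO}, there is $R_{\rm LDBO}$ with $\bs{w}_2(N)=R_{\rm LDBO}\,\mathrm{col}(\bs{w}_1(0),\ldots,\bs{w}_1(N+n))$ for every $\bs{w}\in\B$.
\item Since $N+n\leqslant 2n$, pad $R_{\rm LDBO}$ with zero blocks for the indices $N+n+1,\ldots,2n$. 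This gives a local DBO over the window $\boldsymbol{2n}$.
\item Applying Lemma~\ref{lem:local_DBO} again in the "if" direction, $\bs{w}_2$ is $2n$-locally $N$-reconstructible. Note that $N\in\boldsymbol{2n}$, so the definition applies.
\end{itemize}
This padding step is just the monotonicity in $k$ remarked after Lemma~\ref{lem:local_DBO}.

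\emph{(If).} Assume $\bs{w}_2$ is $2n$-locally $N$-reconstructible for some $N\in\boldsymbol{n}$.
\begin{itemize}
\item By Lemma~\ref{lem:local_DBO}, there is $R_{\rm LDBO}=[r_0\ \cdots\ r_{2n}]$ with $\bs{w}_2(N)=R_{\rm LDBO}\,\mathrm{col}(\bs{w}_1(0),\ldots,\bs{w}_1(2n))$ on $\B$.
\item The same zero-padding shows $\bs{w}_2$ is $k$-locally $N$-reconstructible for every $k\geqslant 2n$.
\item With $\tau=2n\geqslant N$, this is eventual local $N$-reconstructibility in the sense of Definition~\ref{def:even-loc-recons}.
\item Lemma~\ref{lem:eventoglob} then gives global reconstructibility.
\end{itemize}
Alternatively, one can argue directly as in the proof of Lemma~\ref{lem:eventoglob}. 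Time invariance gives $\bs{w}_2(t+N)=\sum_i r_i\bs{w}_1(t+i)$ for all $t\in\N$, so $\sigma^N\widehat{\bs{w}}_2=\big(\sum_i r_i\sigma^i\big)\bs{w}_1$ is a consistent DBO. Proposition~\ref{prop: recons_monomic_DBOform} then concludes.

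\emph{Main subtlety.} The only delicate point is the monotonicity in $k$ (zero-padding). Its justification is that $\sigma\B\subseteq\B$ and restrictions are nested, so a relation valid on the window $\boldsymbol{k}$ remains valid on any larger window. The degenerate case $n=0$ needs a separate check. There $\boldsymbol{0}=\{0\}$ and $2n=0$, so the condition reads $0$-locally $0$-reconstructible. The argument above goes through unchanged, since Lemma~\ref{lem:kmax} with $N=n=0$ yields exactly $0$-local $0$-reconstructibility.
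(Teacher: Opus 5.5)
Your proof is correct and follows essentially the same route as the paper: Lemma~\ref{lem:Nmax} and Lemma~\ref{lem:kmax} followed by monotonicity in $k$ give necessity, and Lemma~\ref{lem:local_DBO} together with Lemma~\ref{lem:eventoglob} give sufficiency. The only difference is cosmetic: the paper gets the monotonicity step directly from Definition~\ref{def:loc-recons} (since $\bs{w}_1|_{2n}=0$ forces $\bs{w}_1|_{N+n}=0$) rather than by zero-padding $R_{\rm LDBO}$, and either way this step needs only the nesting of restrictions, not $\sigma\B\subseteq\B$.
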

\begin{proof}
	(If) If $\bs{w}_2$ is $2n$-locally $N$-reconstructible from $\bs{w}_1$ then, by \Cref{lem:local_DBO}, the existence of local DBO, it follows that $\bs{w}_2$ is eventually locally $N$-reconstructible from $\bs{w}_1$. Hence, by \Cref{lem:eventoglob}, global reconstructibility follows. 
	
	(Only if) By \Cref{lem:Nmax}, global reconstructibility implies global $N$-reconstructibility for some $N\in \boldsymbol{n}$. Furthermore, by \Cref{lem:kmax}, global $N$-reconstructibility implies $(N+n)$-local $N$-reconstructibility. Since $N+n\leqslant 2n~\forall ~N\in \bs{n},$ $\bs{w}_2|_{2n}=0\implies \bs{w}_2|_{N+n}=0$, thus by \Cref{def:loc-recons}, for $N\in \boldsymbol{n}$, $\bs{w}_2$ is $2n$-locally $N$-reconstructible from $\bs{w}_1$.	
\end{proof}

\begin{remark}\label{rem:worstcase}
	The threshold $2n$ in \Cref{thm:recons} often turns out to be a weak one. For one, the depth of the matrix $\mathbb{M}_{1,d}$ in \Cref{prop:weaksub} can often be taken to be less than $n-1$. For example, for full row rank $D_1$, $d=0$ suffices. If $d^*$ is the required depth of $\mathbb{M}_{1,d}$ in \Cref{prop:weaksub}, \Cref{lem:kmax} requires $(N+d^*+1)$-local $N$-reconstructibility. Furthermore, the DV representation approach taken in \Cref{lem:kmax} may turn out to be suboptimal as far as the threshold on $k$ is concerned. In our extensive simulation studies, we have always found a threshold less than $2n$. However, a concrete result still eludes us.
\end{remark}

\section{Data-driven test for reconstructibility}\label{sec:datarecons}

We consider an LTI system $\B$ described by \eqref{eqn:behaviour} where the matrices $R_1(\sigma)$ and $R_2(\sigma)$ are unknown. We assume that we have pre-collected historical data of both the relevant and measured variables\footnote{The relevant variables may be unavailable in standard working conditions due, for example, to the high cost of sensors or security reasons, but it is possible to collect their finite data offline (see, for example, \cite{DBO_10720866, DBO_turan2021data, DBO_shi2022data} ).}. We further assume that the order of the system $n$ is known; while restrictive, this assumption enables a data-driven solution to the DBO problem and is consistent with related work on higher-order difference equations (see, e.g. \cite{van2023behavioral}). 
In this section, using a historical data, we present a necessary and sufficient condition for global reconstructibility. 

Suppose that finite-length historical data from $K$ different trajectories, possibly of different lengths, are available. Let $T_0,T_1,\ldots,T_K$ be non-negative integers, not necessarily distinct. For each $i\in \bs{K}$, let $\widetilde{\bs{w}}^i|_{T_i}$ denote a $q$-variate time-series defined on the interval $[0, T_i]$, i.e., $\widetilde{\bs{w}}^i|_{T_i}\in \B_{T_i}$. The overall finite sample multiple-trajectory historical dataset\footnote{ Unlike most existing works, we do not assume $\B$ to be controllable; we therefore require a multiple-trajectory dataset so as to ensure a suitable ``informativity'' condition is satisfied by it; \textcolor{black}{see \cite{markovsky2022identifiability} for more details.}} is given by
$
\mathcal{W}\coloneqq \{\widetilde{\bs{w}}^0|_{T_0},\ldots,\widetilde{\bs{w}}^K|_{T_K}\}.	
$
The partition of $\mathcal{W}$ corresponding to $\bs{w}=(\bs{w}_1,\bs{w}_2)$ is 
$
{\mathcal{W}}_1\coloneqq \{\widetilde{\bs{w}}_1^0|_{T_0},\ldots,\widetilde{\bs{w}}_1^K|_{T_K}\},
$ and $
{\mathcal{W}}_2\coloneqq \{\widetilde{\bs{w}}_2^0|_{T_0},\ldots,\widetilde{\bs{w}}_2^K|_{T_K}\}
$.     
Fix $k\in\N$ such that $k\leqslant \min\{T_0, T_1,\ldots, T_K\}$. For each trajectory data $\widetilde{\bs{w}}^i|_{T_i}$, we define the Hankel matrix of depth $k$, denoted by $\hankel{{\bs{w}}^i|_{T_i}}{k}\in \R^{q(k+1)\times (T_i-k+1)},$ as follows:
\begin{align*}
	\hankel{\widetilde{\bs{w}}^i|_{T_i}}{k}
	=\scalebox{0.8}{$\left[
		\begin{array}{@{} l l l l @{}}
			{\widetilde{\bs{w}}}^i(0) & {\widetilde{\bs{w}}}^i(1) & \cdots & {\widetilde{\bs{w}}}^i(T_i - k) \\
			{\widetilde{\bs{w}}}^i(1) & {\widetilde{\bs{w}}}^i(2) & \cdots & {\widetilde{\bs{w}}}^i(T_i - k + 1) \\
			\vdots & \vdots & \ddots & \vdots \\
			{\widetilde{\bs{w}}}^i(k) & {\widetilde{\bs{w}}}^i(k+1) & \cdots & {\widetilde{\bs{w}}}^i(T_i)
		\end{array}
		\right]$}.
\end{align*}
For $N\in \N$, let the $N$-th block row of $\hankel{\widetilde{\bs{w}}^i|_{T_i}}{k}$ is given by
$ (\hankel{\widetilde{\bs{w}}^i|_{T_i}}{k})_N=\begin{bsmallmatrix}
		{\widetilde{\bs{w}}}^i(N) & {\widetilde{\bs{w}}}^i(N+1) & \cdots & {\widetilde{\bs{w}}}^i(T_i-k+N)
\end{bsmallmatrix}$.
Using the multiple-trajectory dataset ${\mathcal{W}}$, define the {\em mosaic-Hankel} matrix of depth $k$, denoted by $\hankel{{\mathcal{W}}}{k} \in \R^{q(k+1)\times \sum_{i\in \bs{K}}(T_i-k+1)}$, as:
$
	\hankel{{\mathcal{W}}}{k}\coloneqq \begin{bsmallmatrix}
		\hankel{\widetilde{\bs{w}}^0|_{T_0}}{k}&\hankel{\widetilde{\bs{w}}^1|_{T_1}}{k}&~\cdots~&\hankel{\widetilde{\bs{w}}^K|_{T_K}}{k}
	\end{bsmallmatrix}.
$
For $N\in \N$, $(\hankel{{\mathcal{W}}}{k})_N$ denotes $N$-th block row of $\hankel{{\mathcal{W}}}{k}$ with entries $(\hankel{\widetilde{\bs{w}}^i|_{T_i}}{k})_N$, $i\in \bs{K}$. The matrices $\hankel{{\mathcal{W}}_1}{k}$ and $(\hankel{{\mathcal{W}}_2}{k})_N$ are defined analogously.
Let $n$ be the order of $\B$.
 To apply \Cref{thm:recons}, we need $\mathcal{W}$ that represents the subspace
$\mathbb{B}\!=\left\{\begin{bsmallmatrix}
	\bs{w}_1|_{2n}\\
	\bs{w}_2|_n
\end{bsmallmatrix} \mid (\bs{w}_1,\bs{w}_2)\in \B \right\}$. Let $(\hankel{\mathcal{W}_2}{2n})_{0:n}=\mathrm{col}\left((\hankel{\mathcal{W}_2}{2n})_{0},\ldots, (\hankel{\mathcal{W}_2}{2n})_{n}\right)$.  Note that every dataset satisfies $\mathrm{Im}\begin{bsmallmatrix}
	\hankel{{\mathcal{W}}_1}{2n}\\
	(\hankel{\mathcal{W}_2}{2n})_{0:n}
\end{bsmallmatrix} \!\subseteq\! \mathbb{B}$; we impose the following.
\begin{assumption}\label{assump:1}
	Let $\B$ be an LTI system and let ${\mathcal{W}}$ be its historical trajectory dataset. Let $\mathbb{B}$ be the subspace defined above. We assume that the dataset satisfies $
	\mathrm{Im}\begin{bsmallmatrix}
		\hankel{{\mathcal{W}}_1}{2n}\\
		(\hankel{\mathcal{W}_2}{2n})_{0:n}
	\end{bsmallmatrix}=\mathbb{B}.
	$
\end{assumption}
\begin{remark}
\textcolor{black}{Note that \Cref{assump:1} is presented in a way that is suitable for the data-driven test for reconstructibility (see \Cref{thm:datarecons}). At a more abstract level, \Cref{assump:1} is just another form of the well-known idea of informativity (see \cite{van2020data}), albeit tailored to the case of reconstructibility. Discussion on some important aspects of this assumption, e.g., practicality, verification is warranted at this stage. We provide two sufficient conditions, each of which independently ensures \Cref{assump:1}. To verify these conditions, we assume either direct access to the input, or the knowledge of the order $n$ and the input cardinality $m$ of the system is available.}
\begin{itemize}
	\item \textcolor{black}{For a controllable system with input $\bs{u}$, use a persistently exciting input of order $3n$, i.e., $\hankel{\bs{u}}{3n-1}=3n$ \cite{willems2005note}}.
	\item \textcolor{black}{	Use the dataset satisfying $\mathrm{rank}(\hankel{{\mathcal{W}}}{2n})=m(2n+1)+n$ so that $\mathrm{Im}\hankel{{\mathcal{W}}}{2n}=\B_{2n}$\cite[Corollary 21]{markovsky2022identifiability}. }
\end{itemize}

\end{remark}

\begin{theorem}\label{thm:datarecons}
	Let $\B$ be an LTI system with order $n$ with a dataset ${\mathcal{W}}$ satisfying \Cref{assump:1}. Then, $\bs{w}_2$ is globally reconstructible from $\bs{w}_1$ in $\B$ if and only if for some $N\in \boldsymbol{n}$
	\begin{align}\label{eq:datarecons}
		\mathrm{rowspan}((\hankel{{\mathcal{W}_2}}{2n})_N)\subseteq\mathrm{rowspan}(\hankel{{\mathcal{W}}_1}{2n}).
	\end{align}
\end{theorem}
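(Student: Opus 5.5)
The plan is to reduce the claim to \Cref{thm:recons} and then turn the resulting subspace statement into a row-space inclusion, which \Cref{assump:1} makes checkable on data. By \Cref{thm:recons}, $\bs{w}_2$ is globally reconstructible from $\bs{w}_1$ if and only if $\bs{w}_2$ is $2n$-locally $N$-reconstructible from $\bs{w}_1$ for some $N\in\boldsymbol{n}$. It therefore suffices to show, for each fixed $N\in\boldsymbol{n}$, that $2n$-local $N$-reconstructibility is equivalent to \eqref{eq:datarecons}.

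Fix $N\in\boldsymbol{n}$ and write $H_1\coloneqq\hankel{\mathcal{W}_1}{2n}$, $H_{2,N}\coloneqq(\hankel{\mathcal{W}_2}{2n})_N$ and $H_{2,0:n}\coloneqq(\hankel{\mathcal{W}_2}{2n})_{0:n}$. By \Cref{def:loc-recons}, $2n$-local $N$-reconstructibility says that every $\bs{w}\in\B$ with $\bs{w}_1|_{2n}=0$ has $\bs{w}_2(N)=0$. Since $N\leqslant n$, this condition involves only the coordinates $\bs{w}_1|_{2n}$ and $\bs{w}_2(N)$, and $\bs{w}_2(N)$ is a block of $\bs{w}_2|_n$. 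So it is a statement about the subspace $\mathbb{B}$: for all $\mathrm{col}(v_1,v_2)\in\mathbb{B}$, $v_1=0$ implies that the $N$-th block of $v_2$ vanishes.

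Under \Cref{assump:1}, $\mathbb{B}=\mathrm{Im}\,\mathrm{col}(H_1,H_{2,0:n})$. Every element of $\mathbb{B}$ is therefore $\mathrm{col}(H_1g,H_{2,0:n}g)$ for some vector $g$, and its $N$-th $\bs{w}_2$-block is $H_{2,N}g$. The condition becomes: for all $g$, $H_1g=0$ implies $H_{2,N}g=0$, that is, $\ker H_1\subseteq\ker H_{2,N}$. Taking orthogonal complements, this is equivalent to $\mathrm{rowspan}\,H_{2,N}\subseteq\mathrm{rowspan}\,H_1$, which is exactly \eqref{eq:datarecons}. The argument mirrors the one in the proof of \Cref{lem:local_DBO}.

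Combining the two equivalences over $N\in\boldsymbol{n}$ proves the theorem. The only step needing care is the direction from \eqref{eq:datarecons} back to local reconstructibility. There we need the data to span all of $\mathbb{B}$, not just a subspace of it; otherwise a trajectory of $\B$ outside the data span could violate the implication. This is exactly what the equality in \Cref{assump:1} guarantees. The forward direction only needs $\mathrm{Im}\subseteq\mathbb{B}$, which every dataset satisfies.
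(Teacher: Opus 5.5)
Your proposal is correct and follows essentially the same route as the paper: reduce to \Cref{thm:recons}, then use \Cref{assump:1} to turn $2n$-local $N$-reconstructibility into $\ker(\hankel{\mathcal{W}_1}{2n})\subseteq\ker((\hankel{\mathcal{W}_2}{2n})_N)$, and dualise this to the row-space inclusion. You also note that only the direction from the row-space inclusion back to local reconstructibility needs the equality in \Cref{assump:1}, while the other direction needs only the inclusion that every dataset satisfies; this is correct, and the paper leaves it implicit.
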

\begin{proof}
Since $\mathcal{W}$ satisfies \Cref{assump:1}, for every $\bs{w}\in \B$ there exists $v\in \R^{\sum_{i\in \bs{K}}(T_i-2n+1)}$ such that, for all $N\in \boldsymbol{n}$,	$
		\begin{bsmallmatrix}
			\bs{w}_1|_{2n}\\
			\bs{w}_2(N)
		\end{bsmallmatrix}=\begin{bsmallmatrix}
			\hankel{{\mathcal{W}}_1}{2n}\\
			(\hankel{{\mathcal{W}}}{k})_N
		\end{bsmallmatrix}v.$
	Thus, for $N\in \boldsymbol{n}$, $\left(\bs{w}_1|_{2n}=0\implies\right.$ $\left. \bs{w}_2(N)=0\right)$ if and only if $\ker(\hankel{{\mathcal{W}}_1}{2n})\subseteq \ker((\hankel{{\mathcal{W}_2}}{k})_N) $. Hence, by \Cref{def:loc-recons}, for $N\in \bs{n}$, $\bs{w}_2$ is $2n$-locally $N$-reconstructible if and only if \eqref{eq:datarecons} is satisfied. The result then follows from \Cref{thm:recons}.   
\end{proof}

\textcolor{black}{Dualising the rowspan condition in \Cref{thm:datarecons} directly yields the local DBO in \Cref{lem:local_DBO}, from which a global DBO can be obtained (see Algorithm 1).}
\begin{figure*}[t]
	\centering
	\subfloat[$\bs{x}_1$ estimate using $\Sigma_1$]{%
		\includegraphics[width=0.235\textwidth]{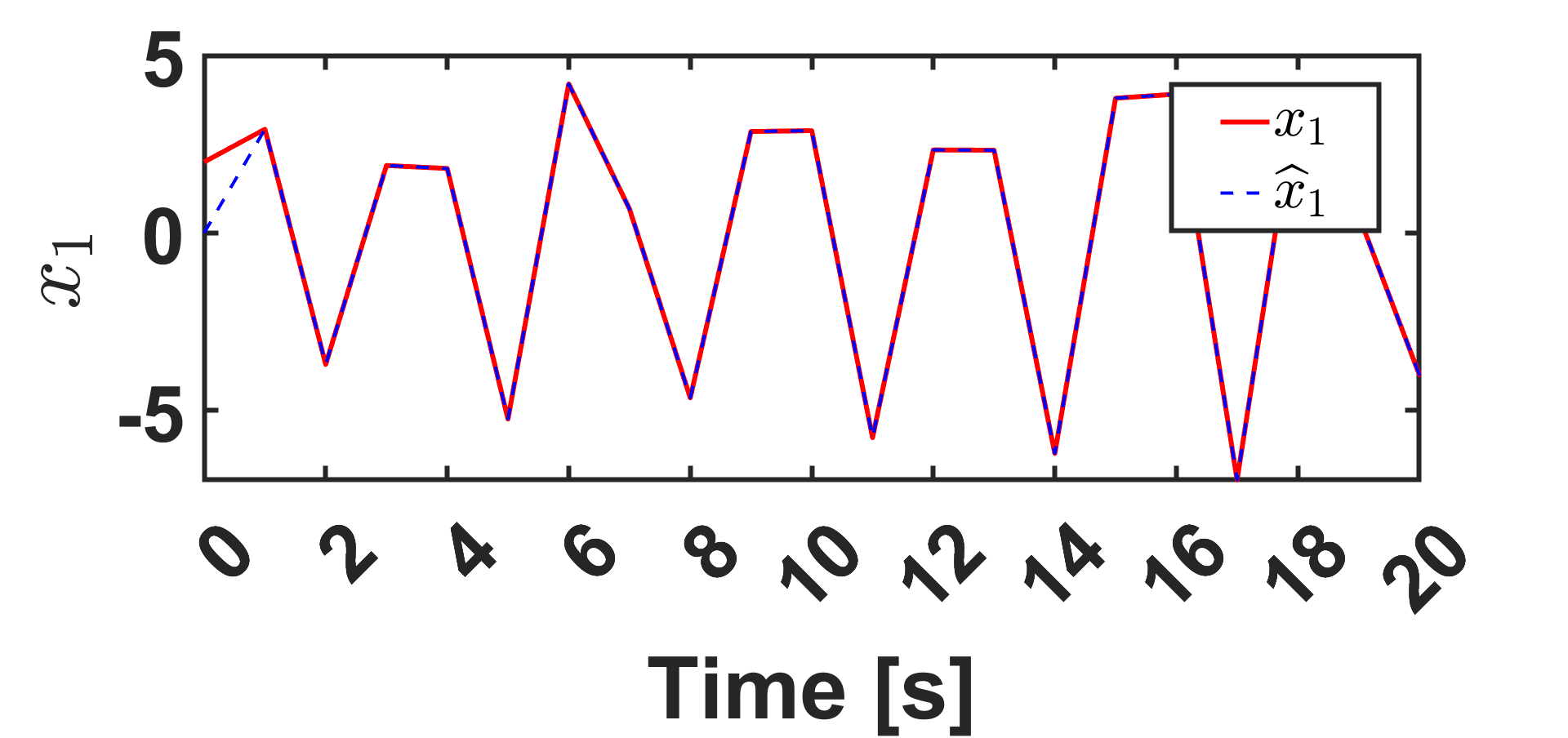}}
	\hfill
	\subfloat[$\bs{x}_2$ estimate using $\Sigma_1$]{%
		\includegraphics[width=0.235\textwidth]{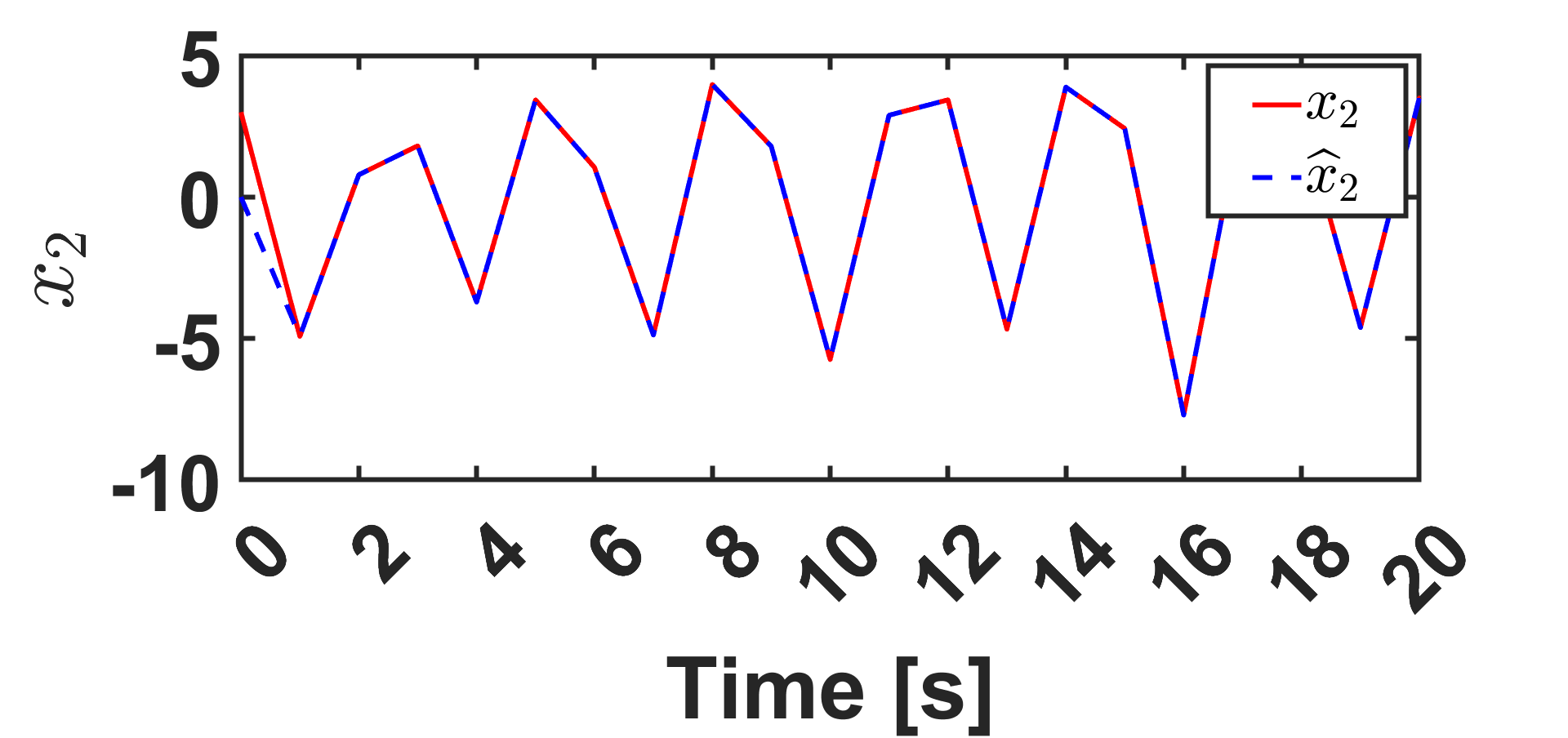}}
	\hfill
	\subfloat[$\bs{x}_1$ estimate using $\Sigma_2$]{%
		\includegraphics[width=0.235\textwidth]{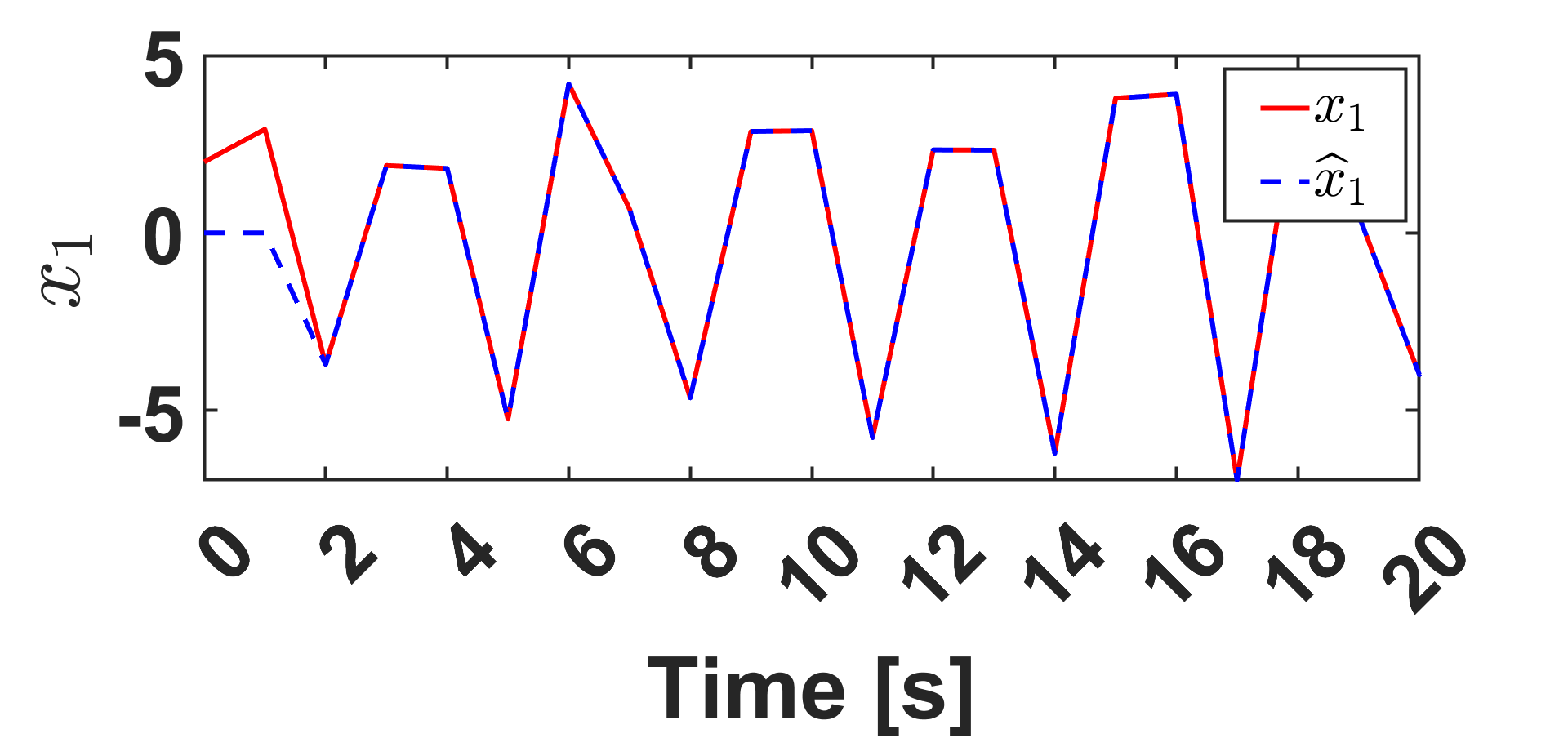}}
	\hfill
	\subfloat[$\bs{x}_2$ estimate using $\Sigma_2$]{%
		\includegraphics[width=0.235\textwidth]{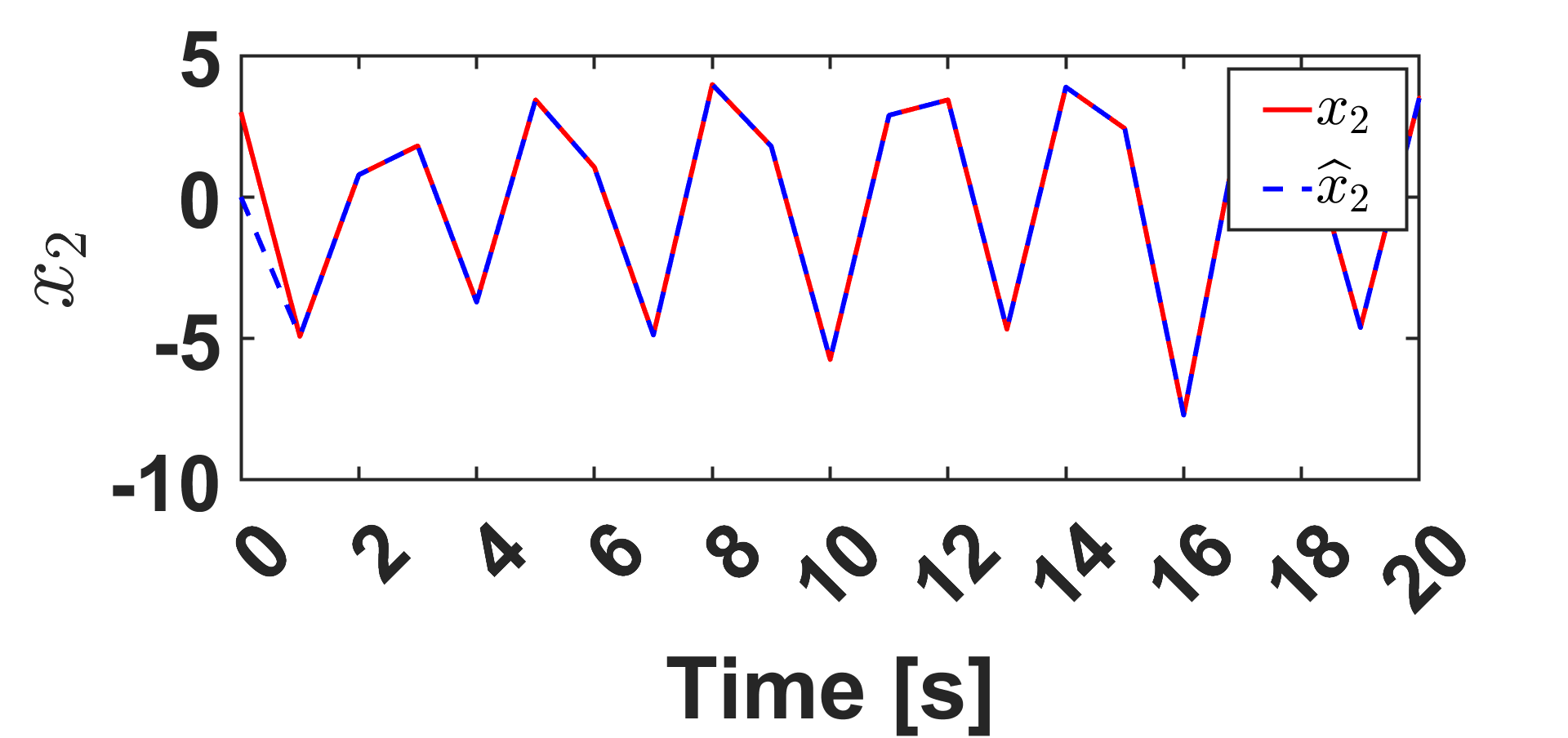}}
	
	\caption{State estimation using the two DBOs in \Cref{sec:example}}
	\label{fig:eg}
\end{figure*}
\begin{remark}
\textcolor{black}{By \Cref{rem:worstcase}, $2n$ is a worst-case threshold for \Cref{thm:recons}, leading to the following observation on the under- and overestimation of $n$ by $\widehat{n}$ for $\mathcal{W}$ satisfying \Cref{assump:1}. Consider the following set inclusion}
\begin{align}\label{eqn:n_dep}
	\mathrm{rowspan}((\hankel{{\mathcal{W}_2}}{2\widehat{n}})_N)\!\subseteq\!\mathrm{rowspan}(\hankel{{\mathcal{W}}_1}{2\widehat{n}})
\end{align}
\textcolor{black}{If $\widehat{n}$ is an underestimation, i.e., $\widehat{n}<n$, then the satisfaction of \eqref{eqn:n_dep} for some $N\in \bs{\widehat{n}}$ implies global reconstructibility; otherwise, the test is inconclusive. On the other hand, if $\widehat{n}$ is an overestimation, i.e., $\widehat{n}\geqslant n$, the satisfaction of \eqref{eqn:n_dep} for some $N\in \bs{\widehat{n}}$ implies global reconstructibility otherwise we conclude failure of global reconstructibility.}
\end{remark}
\section{Deadbeat observer design from finite data}\label{sec:algorithm}
In this section, using a dataset satisfying \eqref{eq:datarecons}, we construct a DBO: $\sigma^N\widehat{\bs{w}}_2=P(\sigma)\bs{w}_1$ where $N\in \N$ and $P(\sigma) \in\R^{q_2\times q_1}[\sigma]$. 
 We then prove in \Cref{lemma:onetoallDBO} that all other DBOs can be obtained from a single DBO, yielding a data-based parametrisation of all DBOs. First, we state a model-based parametrisation from \cite[Theorem 7]{valcher_bisiacco2006behavioral}.
Let $\B$ be given by \eqref{eqn:behaviour}, with $R_2(\sigma)$ right monomic. Then, $\exists~U(\sigma)\in \R^{g\times g}[\sigma]$ unimodular such that $U(\sigma)R(\sigma)=\begin{bsmallmatrix}
	L_1(\sigma)&\Delta(\sigma)\\
	L_0(\sigma)&0
\end{bsmallmatrix}\eqqcolon\overline{R}(\sigma)$ and $\B=\ker \overline{R}(\sigma)$
where $\Delta(\sigma)\in \R^{q_2\times q_2}[\sigma]$ is square monomic.
\begin{proposition}[\cite{valcher_bisiacco2006behavioral}]\label{prop:parametDBO}
	Let $\B=\ker \overline{R}(\sigma)$ as above. For $P(\sigma)\!\in \!\R^{q_2\times q_1}[\sigma]$, $Q(\sigma)\!\in \! \R^{q_2\times q_2}[\sigma]$,  $	Q(\sigma)\widehat{\bs{w}}_2=P(\sigma)\bs{w}_1$ is a DBO for $\B$ if and only if there exist Laurent polynomial matrices $X(\sigb)\in \R^{q_2\times q_2}[\sigb]$ and $Y(\sigb)\in \R^{q_2\times (g-q_2)}[\sigb]$ such that $\begin{bsmallmatrix}
		-P(\sigma)&~Q(\sigma)
	\end{bsmallmatrix}=\begin{bsmallmatrix}
		X(\sigb)&~Y(\sigb)
	\end{bsmallmatrix}\overline{R}(\sigma)$ where $Q(\sigma)=X(\sigb)\Delta(\sigma)$ is a monomic.
\end{proposition}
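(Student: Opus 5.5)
The plan is to prove both implications directly from \Cref{def:DBO}. The negative powers of $\sigma$ are handled by multiplying through by a large enough power $\sigma^M$. The one algebraic fact needed repeatedly is that a square monomic $Q(\sigma)$ has a left ``inverse up to a shift'': there is $Z(\sigma)$ with $Z(\sigma)Q(\sigma)=\sigma^L I_{q_2}$ (take $Z=\mathrm{adj}\,Q$, since $\det Q=c\sigma^L$). Equivalently, $Q^{-1}$ is a Laurent polynomial matrix. Throughout, $Q$ is taken square monomic, as stipulated after \Cref{prop: recons_monomic_DBOform}.

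\emph{If.} Suppose $\begin{bsmallmatrix}-P&Q\end{bsmallmatrix}=\begin{bsmallmatrix}X&Y\end{bsmallmatrix}\overline{R}$ with $X,Y$ Laurent.
\begin{itemize}
\item Choose $M$ so that $\sigma^M X$ and $\sigma^M Y$ are polynomial. Then every $\bs{w}\in\B$ satisfies $\sigma^M\big(Q(\sigma)\bs{w}_2-P(\sigma)\bs{w}_1\big)=0$, i.e.\ $Q\bs{w}_2-P\bs{w}_1$ vanishes for $t\geqslant M$.
\item For (a), note that $Q(\sigma)$ is square and nonsingular, so the operator $Q(\sigma)$ is surjective on $(\R^{q_2})^\N$. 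Hence some $\widehat{\bs{w}}_2$ solves $Q\widehat{\bs{w}}_2=P\bs{w}_1$.
\item For (b), set $\bs{e}=\bs{w}_2-\widehat{\bs{w}}_2$. Then $Q\bs{e}=Q\bs{w}_2-P\bs{w}_1$, so $\sigma^MQ\bs{e}=0$. Applying $Z$ gives $\sigma^{M+L}\bs{e}=0$, so $\bs{e}(t)=0$ for $t\geqslant M+L$.
\end{itemize}

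\emph{Only if.} Consider the extended behaviour $\B_e=\ker\begin{bsmallmatrix}R_e\end{bsmallmatrix}$ in the variables $(\bs{w}_1,\bs{w}_2,\widehat{\bs{w}}_2)$, where $R_e\coloneqq\begin{bsmallmatrix}\overline{R}_1&\overline{R}_2&0\\-P&0&Q\end{bsmallmatrix}$ and $\overline{R}=\begin{bsmallmatrix}\overline{R}_1&\overline{R}_2\end{bsmallmatrix}$.
\begin{itemize}
\item Condition (b) says that every element of $\B_e$ satisfies $\sigma^N(\bs{w}_2-\widehat{\bs{w}}_2)=0$. Thus $\begin{bsmallmatrix}0&\sigma^NI&-\sigma^NI\end{bsmallmatrix}$ annihilates $\B_e$.
\item By the kernel/module correspondence \cite{willems1991paradigms}, this row lies in the $\R[\sigma]$-row module of $R_e$. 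So there are polynomial $A,C$ with
\[
\begin{bsmallmatrix}0&\sigma^NI&-\sigma^NI\end{bsmallmatrix}=A\begin{bsmallmatrix}\overline{R}&0\end{bsmallmatrix}+C\begin{bsmallmatrix}-P&0&Q\end{bsmallmatrix}.
\]
\item Comparing the $\widehat{\bs{w}}_2$-columns gives $CQ=-\sigma^NI$, hence $C=-\sigma^NQ^{-1}$. The $(\bs{w}_1,\bs{w}_2)$-columns then give $\sigma^NQ^{-1}\begin{bsmallmatrix}-P&Q\end{bsmallmatrix}=A\overline{R}$.
\item Therefore $\begin{bsmallmatrix}-P&Q\end{bsmallmatrix}=\sigma^{-N}QA\,\overline{R}$. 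Define $\begin{bsmallmatrix}X&Y\end{bsmallmatrix}\coloneqq\sigma^{-N}QA$, which is a Laurent matrix. Reading off the last block column of $\overline{R}$, which is $\begin{bsmallmatrix}\Delta\\0\end{bsmallmatrix}$, yields $Q=X\Delta$.
\end{itemize}

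\emph{Main obstacle.} The delicate step is the module inclusion over the time axis $\N$ rather than $\mathbb{Z}$. The annihilator module of $\ker R_e$ on $(\R^\bullet)^\N$ must coincide with the row module of $R_e$. This is the standard result \cite[Proposition III.3]{willems1991paradigms} for kernel representations over $\R[\sigma]$, and I would invoke it, checking that it covers non-full-row-rank $R_e$; if needed, first reduce $R_e$ to full row rank by a unimodular transformation. A secondary check in the ``if'' part is the surjectivity of a nonsingular square polynomial operator on $\N$-indexed signals. This follows from the Smith form, since each diagonal equation $p(\sigma)y=v$ with $p\neq0$ is solvable recursively by choosing the initial values freely.
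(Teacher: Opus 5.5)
Your proof is correct. There is nothing in the paper to match it against: the paper gives no argument for this proposition and imports it from \cite[Theorem 7]{valcher_bisiacco2006behavioral}, so yours is an independent, self-contained derivation.

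The ``if'' direction is complete as written (take $Z=c^{-1}\mathrm{adj}\,Q$ when $\det Q=c\sigma^L$). Clearing denominators turns the Laurent identity into the statement that $Q(\sigma)\bs{w}_2-P(\sigma)\bs{w}_1$ vanishes for $t\geqslant M$ on $\B$. The shifted left inverse of $Q$ then gives a trajectory-independent horizon $M+L$.

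In the ``only if'' direction, passing to the extended behaviour $\B_e$ is an economical device. One application of the annihilator/row-module correspondence gives the factorisation. It also gives the monomicity of $Q$ as a by-product, since $CQ=-\sigma^N I_{q_2}$ with $C$ polynomial forces $\det Q$ to be a monomial. So in this direction you need neither the standing assumption that $Q$ is square monomic nor condition (a) of \Cref{def:DBO}.

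Your route also shows why Laurent multipliers appear: they encode that the residual is only eventually zero on $\B$. Consistent DBOs are exactly those for which $X,Y$ can be chosen polynomial.

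The step you flag as delicate is sound on $\N$, and in fact elementary; no full-row-rank reduction is needed. Let $p=q+q_2$ be the number of columns of $R_e$. The map $\bs{w}\mapsto\big(\sum_j c_j\sigma^j\mapsto\sum_j c_j\bs{w}(j)\big)$ identifies $(\R^{p})^{\N}$ with the algebraic dual of $\R^{1\times p}[\sigma]$. Under it, $(r(\sigma)\bs{w})(t)$ is the value of the functional at $\sigma^t r$. Hence $\ker R_e$ consists exactly of the functionals vanishing on the $\R[\sigma]$-row module of $R_e$. Since a linear functional separating a vector from a subspace always exists, every polynomial row annihilating $\ker R_e$ lies in that module.

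The paper's citation buys brevity. Yours buys a transparent proof whose only external input is this duality.
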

\begin{lemma}\label{lemma:onetoallDBO}
Let $\B=\ker \overline{R}(\sigma)$ as above.
	Let $Q(\sigma)\widehat{\bs{w}}_2=P(\sigma)\bs{w}_1$  be a DBO for $\B$. Then,  $\widetilde{Q}(\sigma)\widehat{\bs{w}}_2=\widetilde{P}(\sigma)\bs{w}_1$  is a DBO for $\B$ if and only if there exist Laurent polynomial matrices $\widehat{X}(\sigb)\in \R^{q_2\times q_2}[\sigb]$ and  $\widehat{Y}(\sigb)\in \R^{q_2\times (g-q_2) }[\sigb]$ such that $\widetilde{Q}(\sigma)=\widehat{X}(\sigb)Q(\sigma)$ is square monomic and
	\begin{align}\label{eqn:1DBOtoall}
		\begin{bsmallmatrix}
			-\widetilde{P}(\sigma)&~\widetilde{Q}(\sigma)
		\end{bsmallmatrix}=\begin{bsmallmatrix}
			\widehat{X}(\sigb)&~\widehat{Y}(\sigb)
		\end{bsmallmatrix}\begin{bsmallmatrix}
			-P(\sigma)&~	Q(\sigma)\\
			L_0(\sigma)&0
		\end{bsmallmatrix}.
	\end{align}	
\end{lemma}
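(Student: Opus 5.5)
Both directions will be obtained by reducing to \Cref{prop:parametDBO}, using the fact that the given DBO itself has such a representation. Applying \Cref{prop:parametDBO} to the given DBO $Q(\sigma)\widehat{\bs{w}}_2=P(\sigma)\bs{w}_1$ yields Laurent matrices $X(\sigb)$, $Y(\sigb)$ with
\[
\begin{bsmallmatrix}-P(\sigma)&~Q(\sigma)\end{bsmallmatrix}=\begin{bsmallmatrix}X(\sigb)&~Y(\sigb)\end{bsmallmatrix}\overline{R}(\sigma),
\]
and $Q(\sigma)=X(\sigb)\Delta(\sigma)$ monomic. Since $\overline{R}$ has block rows $\begin{bsmallmatrix}L_1&\Delta\end{bsmallmatrix}$ and $\begin{bsmallmatrix}L_0&0\end{bsmallmatrix}$, this reads $-P=XL_1+YL_0$ and $Q=X\Delta$.

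The first key step is to note that $X(\sigb)$ is invertible over $\R[\sigb]$. Because $Q$ and $\Delta$ are square monomic, $\det Q=c\sigma^{a}$ and $\det\Delta=c'\sigma^{b}$ with $c,c'\neq 0$. Hence $\det X=(c/c')\sigma^{a-b}$ is a unit in $\R[\sigb]$, so $X^{-1}\in\R^{q_2\times q_2}[\sigb]$. The relation between the two kernel matrices therefore becomes
\[
\begin{bsmallmatrix}-P&~Q\\ L_0&0\end{bsmallmatrix}=\begin{bsmallmatrix}X&~Y\\0&~I\end{bsmallmatrix}\overline{R}(\sigma),
\]
and the block triangular transformation matrix $T(\sigb):=\begin{bsmallmatrix}X&Y\\0&I\end{bsmallmatrix}$ is unimodular over $\R[\sigb]$.

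\textbf{(If).} Suppose \eqref{eqn:1DBOtoall} holds with $\widetilde{Q}=\widehat{X}Q$ square monomic. Composing the two representations gives
\[
\begin{bsmallmatrix}-\widetilde{P}&~\widetilde{Q}\end{bsmallmatrix}=\begin{bsmallmatrix}\widehat{X}X&~\widehat{X}Y+\widehat{Y}\end{bsmallmatrix}\overline{R},
\]
so $\widetilde{Q}=(\widehat{X}X)\Delta$ is monomic. By \Cref{prop:parametDBO}, $\widetilde{Q}\widehat{\bs{w}}_2=\widetilde{P}\bs{w}_1$ is a DBO.

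\textbf{(Only if).} Suppose $\widetilde{Q}\widehat{\bs{w}}_2=\widetilde{P}\bs{w}_1$ is a DBO. By \Cref{prop:parametDBO} there exist $\widetilde{X}$, $\widetilde{Y}$ with $\begin{bsmallmatrix}-\widetilde{P}&\widetilde{Q}\end{bsmallmatrix}=\begin{bsmallmatrix}\widetilde{X}&\widetilde{Y}\end{bsmallmatrix}\overline{R}$ and $\widetilde{Q}$ monomic. Substitute $\overline{R}=T^{-1}\begin{bsmallmatrix}-P&Q\\L_0&0\end{bsmallmatrix}$, where
\[
T^{-1}=\begin{bsmallmatrix}X^{-1}&-X^{-1}Y\\0&I\end{bsmallmatrix}.
\]
This yields \eqref{eqn:1DBOtoall} with $\widehat{X}=\widetilde{X}X^{-1}$ and $\widehat{Y}=\widetilde{Y}-\widetilde{X}X^{-1}Y$. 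One then checks that $\widehat{X}Q=\widetilde{X}\Delta=\widetilde{Q}$, which is monomic.

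The main obstacle is the invertibility of $X$ over Laurent polynomials. The step to get right is that monomicity of both $Q$ and $\Delta$ forces $\det X$ to be a monomial. This requires $\Delta$ to be nonsingular, which holds because it is square monomic. Everything else is block-matrix bookkeeping.
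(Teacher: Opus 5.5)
Your proposal is correct and follows the paper's proof. Both directions reduce to \Cref{prop:parametDBO} through the block-triangular matrix $\begin{bsmallmatrix} X(\sigb) & Y(\sigb)\\ 0 & I_{g-q_2}\end{bsmallmatrix}$, and the only-if direction rests on $X(\sigb)$ being invertible over $\R[\sigb]$. You also fill in details the paper leaves implicit: the determinant argument for that invertibility, the explicit inverse with the resulting $\widehat{X}=\widetilde{X}X^{-1}$ and $\widehat{Y}=\widetilde{Y}-\widetilde{X}X^{-1}Y$, and the check $\widehat{X}Q=\widetilde{Q}$. Along the way you correct the paper's sign slip in $\begin{bsmallmatrix}\widetilde{P}&~\widetilde{Q}\end{bsmallmatrix}$.
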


\begin{proof}
	By \Cref{prop:parametDBO},
	  $\exists ~X(\sigb),$ $Y(\sigb)$ such that
	\begin{align}\label{eq:lemdbo1}
		\begin{split}
			\begin{bsmallmatrix}
				-P(\sigma)&~Q(\sigma)
			\end{bsmallmatrix}=\begin{bsmallmatrix}
				X(\sigb)&~Y(\sigb)
			\end{bsmallmatrix}\overline{R}(\sigma).
		\end{split}
	\end{align}
	(If) Assume $\widetilde{Q}(\sigma)$ is monomic and \eqref{eqn:1DBOtoall} holds. By \eqref{eq:lemdbo1}, we have
	$\begin{bsmallmatrix}
		-\widetilde{P}(\sigma)&~\widetilde{Q}(\sigma)
	\end{bsmallmatrix}= \begin{bsmallmatrix}
	\widehat{X}(\sigb)&~\widehat{Y}(\sigb)
\end{bsmallmatrix}\begin{bsmallmatrix}
	X(\sigb)&~Y(\sigb)\\
	0& I_{g-q_2}
\end{bsmallmatrix}\overline{R}(\sigma).$
Thus, by \Cref{prop:parametDBO}, the claim follows.\\
	(Only if ) Assume  $
	\widetilde{Q}(\sigma)\widehat{\bs{w}}=\widetilde{P}(\sigma)\bs{w}_1
	$ is a DBO for $\B$. Then, by \Cref{prop:parametDBO}, $\exists ~\widetilde{X}(\sigb)$, $\widetilde{Y}(\sigb)$ of suitable sizes such that $	\begin{bsmallmatrix}
		\widetilde{P}(\sigma)&~\widetilde{Q}(\sigma)
	\end{bsmallmatrix}=\begin{bsmallmatrix}
		\widetilde{X}(\sigb)&~\widetilde{Y}(\sigb)
	\end{bsmallmatrix}\overline{R}(\sigma).$
In \eqref{eq:lemdbo1}, since $Q(\sigma)=X(\sigb)\Delta(\sigma)$ and $\Delta(\sigma)$ are monomic, $X(\sigb)$ is invertible over $\R[\sigb]$.
Thus, we have $\begin{bsmallmatrix}
		-\widetilde{P}(\sigma)&~\widetilde{Q}(\sigma)
\end{bsmallmatrix}= \begin{bsmallmatrix}
\widetilde{X}(\sigma)&~\widetilde{Y}(\sigma)
\end{bsmallmatrix}\begin{bsmallmatrix}
	X(\sigb)&~Y(\sigb)\\
	0&I_{g-q_2}
\end{bsmallmatrix}^{-1} \begin{bsmallmatrix}
-P(\sigma)&~	Q(\sigma)\\
L_0(\sigma)&0
\end{bsmallmatrix}$. 
\end{proof}
\noindent\textbf{Algorithm 1:} \textit{Data-driven DBO construction}\\
\underline{Input}: Historic dataset $\mathcal{W}$ satisfying \eqref{eq:datarecons} in \Cref{thm:datarecons}\\
\underline{Output}: A consistent DBO and parametrisation of all DBOs\\
(A) \textit{Construction of a consistent DBO}
\begin{enumerate}
	\item Find $N^*\in \bs{n}$ and $R_{\rm LDBO}=\begin{bsmallmatrix}
		R_0&R_1&\cdots&R_{2n}
	\end{bsmallmatrix},~R_i\in \R^{q_2\times q_1}$   such that 
		$(\hankel{{\mathcal{W}_2}}{2n})_{N^*}=R_{\rm LDBO}\hankel{{\mathcal{W}}_1}{2n}$.
	\item Define $R_{\rm DBO}(\sigma)=\sum_{i=0}^{2n}R_i\sigma^i$. Then, the system
	$\sigma^{N^*}\widehat{\bs{w}}_2=R_{\rm  DBO}(\sigma)\bs{w}_1$ is a DBO for $\B$.
\end{enumerate}
(B) \textit{Parametrisation of all DBOs}
\begin{enumerate}
	\item Find $L=\begin{bsmallmatrix}
		\ell_0&\ell_1&\cdots&\ell_{n}
	\end{bsmallmatrix}\in \R^{\bullet\times q_1(n+1)},~\ell_i\in \R^{\bullet\times q_1}$ such that
	$\mathrm{rowspan} L=\mathrm{leftker}\hankel{{\mathcal{W}}_1}{n}$. 
	\item Define $L(\sigma)=\sum_{i=0}^{n}\ell_i\sigma^i$. Choose matrices $\widehat{X}(\sigb)$, $\widehat{Y}(\sigb)$ of suitable sizes such that ${Q}(\sigma)\coloneqq\sigma^{N^*}\widehat{X}(\sigb)$ is square monomic and
	$\begin{bsmallmatrix}
		-P(\sigma)&~Q(\sigma)
	\end{bsmallmatrix}\coloneqq\begin{bsmallmatrix}
		\widehat{X}(\sigb)&~\widehat{Y}(\sigb)
	\end{bsmallmatrix}\begin{bsmallmatrix}
	-R_{\rm DBO}(\sigma)&~~	\sigma^{N^*}I_{q_2}\\
	L(\sigma)& 0
\end{bsmallmatrix}$. Then, $Q(\sigma)\widehat{\bs{w}}_2=P(\sigma)\bs{w}_1$ is DBO for $\B$.
\end{enumerate}
\begin{theorem}
Let $\mathcal{W}$ be a historic dataset of an LTI system $\B$ satisfying \eqref{eq:datarecons}. Then, Algorithm 1 computes a consistent DBO and a parametrisation of all DBOs for $\B$.
\end{theorem}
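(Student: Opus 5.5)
The proof has two parts, matching the two parts of Algorithm 1. We assume throughout that \Cref{assump:1} holds together with \eqref{eq:datarecons}.

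\emph{Part (A).} By \eqref{eq:datarecons} there is some $N^*\in\bs{n}$ with $\mathrm{rowspan}((\hankel{\mathcal{W}_2}{2n})_{N^*})\subseteq\mathrm{rowspan}(\hankel{\mathcal{W}_1}{2n})$. Hence a matrix $R_{\rm LDBO}$ solving the linear equation in step (A)1 exists. By \Cref{assump:1}, every $\bs{w}\in\B$ can be written as $\mathrm{col}(\bs{w}_1|_{2n},\bs{w}_2(N^*))=\mathrm{col}(\hankel{\mathcal{W}_1}{2n},(\hankel{\mathcal{W}_2}{2n})_{N^*})v$ for some vector $v$. Multiplying out gives $\bs{w}_2(N^*)=R_{\rm LDBO}\,\mathrm{col}(\bs{w}_1(0),\ldots,\bs{w}_1(2n))$ for all $\bs{w}\in\B$; this is exactly the local DBO of \Cref{lem:local_DBO}.

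To globalise this identity, note that $\B$ is time-invariant, so $\sigma^t\bs{w}\in\B$ for every $t\in\N$. Applying the identity to $\sigma^t\bs{w}$ gives $(\sigma^{N^*}\bs{w}_2)(t)=(R_{\rm DBO}(\sigma)\bs{w}_1)(t)$ for all $t$, so every trajectory of $\B$ lies in $\ker[-R_{\rm DBO}(\sigma)\;\;\sigma^{N^*}I_{q_2}]$. The DBO conditions then follow by the argument of (ii)$\Rightarrow$(iii) in \Cref{prop: recons_monomic_DBOform}:
\begin{itemize}
\item[(a)] holds by taking $\widehat{\bs{w}}_2=\bs{w}_2$;
\item[(b)] holds because $\sigma^{N^*}\widehat{\bs{w}}_2=R_{\rm DBO}(\sigma)\bs{w}_1=\sigma^{N^*}\bs{w}_2$, so $\widehat{\bs{w}}_2(t)=\bs{w}_2(t)$ for $t\geqslant N^*$;
\item consistency holds by the same identity applied to $\bs{w}_2$ itself.
\end{itemize}

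\emph{Part (B).} The plan is to apply \Cref{lemma:onetoallDBO} with $Q=\sigma^{N^*}I$ and $P=R_{\rm DBO}$. Any square monomic $\widehat{X}$ is allowed there, so the output of step (B)2 is a DBO, and every DBO arises this way, provided the block $L_0(\sigma)$ of $\overline R$ can be replaced by $L(\sigma)$. So it must be shown that $L(\sigma)$ obtained from the left kernel of $\hankel{\mathcal{W}_1}{n}$ generates, over $\R[\sigb]$, the same row module as $L_0(\sigma)$. Equivalently, $\ker L(\sigma)=\Pi_{\bs{w}_1}\B=\ker L_0(\sigma)$ up to the Laurent (shift-invariant, finite-past) equivalence used in \Cref{prop:parametDBO}.

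The steps for this are as follows.
\begin{itemize}
\item[(i)] By \Cref{assump:1}, $\mathrm{Im}\,\hankel{\mathcal{W}_1}{n}=(\Pi_{\bs{w}_1}\B)_n$, since it is a projection of the first $n+1$ block rows of the informative $2n$-depth data. Hence the rows of $L$ are exactly the annihilators of $(\Pi_{\bs{w}_1}\B)_n$.
\item[(ii)] $\Pi_{\bs{w}_1}\B=\ker L_0(\sigma)$ has order at most $n$, so its annihilator module is generated by polynomial laws of degree at most $n$; this needs a shortest-lag / lag $\leqslant$ order argument. Shifted copies of these degree-$\leqslant n$ annihilators then generate the whole module. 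Thus $L(\sigma)$ and $L_0(\sigma)$ have the same row module up to the unimodular (Laurent) transformations allowed in the parametrisation.
\item[(iii)] Substitute into \eqref{eqn:1DBOtoall}, absorbing the transforming matrices into $\widehat{Y}$.
\end{itemize}

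The main obstacle is step (ii): showing rigorously that the depth-$n$ left kernel captures all of the annihilator module of the non-controllable projected behaviour. This may require working over $\R[\sigb]$, where the non-minimal/monomic parts are harmless, rather than over $\R[\sigma]$.
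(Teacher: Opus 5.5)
Your proposal follows essentially the paper's route. Part (A) is the local-DBO-plus-time-invariance argument that the paper packages as \Cref{thm:datarecons} and \Cref{lem:eventoglob}. Part (B) applies \Cref{lemma:onetoallDBO} after identifying the data-based $L(\sigma)$ with $L_0(\sigma)$ through \Cref{assump:1} and a degree bound of $n$ on generators of the annihilators of $\Pi_{\bs{w}_1}\B$, which the paper merely asserts with a citation.

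Your flagged step (ii) closes directly over $\R[\sigma]$, with no Laurent detour: take $L_0(\sigma)$ row reduced, so its row degrees sum to the order of $\ker L_0(\sigma)=\Pi_{\bs{w}_1}\B$. That order is at most $n$, because each nonzero maximal minor $\det L_{0m}(\sigma)$ of $L_0(\sigma)$ yields the maximal minor $\pm\det\Delta(\sigma)\det L_{0m}(\sigma)$ of $\overline{R}(\sigma)$, exactly as in the proof of \Cref{lem:Nmax}.
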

\begin{proof}
	\emph{(A)} follows from \Cref{thm:datarecons} and \Cref{lem:eventoglob}. Note that $L_0(\sigma)$ in \Cref{lemma:onetoallDBO} generates all $g(\sigma)\in \R^{1\times q_1}[\sigma]$ such that $g(\sigma)\bs{w}_1=0~\forall~(\bs{w}_1,\bs{w}_2)\in \B$, thus\footnote{For $P(\sigma)=[p_{ij}(\sigma)]\in \R^{m\times n}[\sigma]$, the degree of $P(\sigma)$, denoted by $\deg P(\sigma)$, is defined as $\deg P(\sigma)\coloneqq \max_{ij}\deg p_{ij}(\sigma)$. } $\deg L_0(\sigma)\leqslant n$. Since $\mathcal{W}$ satisfies \Cref{assump:1}, $L(\sigma)$ computed in step $2)$ of \emph{(B)} represents $L_0(\sigma)$ in \Cref{lemma:onetoallDBO}\cite[Section XIV]{willems1991paradigms}. 
\end{proof}

\section{Example}\label{sec:example}
	In this section, we illustrate our results for state estimation in the presence of an unknown input (disturbance). We consider the state-space system with state $\bs{x}\in (\R^2)^{\N}$, unknown input (disturbance) $\bs{d}\in \R^{\N}$ and output $\bs{y}\in \R^{\N}$ given by   
$
			\bs{x}(k+1)=\begin{bsmallmatrix}
				0 & ~~1 \\
				-1 &~ -1
			\end{bsmallmatrix}\bs{x}+
		\begin{bsmallmatrix}
		-1 \\
		~1
	\end{bsmallmatrix}\bs{d},~
\bs{y}=\begin{bsmallmatrix} 1 & ~0 \end{bsmallmatrix}\bs{x}.
	$
	We are interested in a DBO of $\bs{x}$ from $\bs{y}$. Thus, we have  $\bs{w}_1=\bs{y}$, $\bs{w}_2=\bs{x}$ and $\bs{d}$ is irrelevant (see \Cref{rem:irrelevant}).
			By choosing $\bs{d}(t)$ to be zero-mean unit-variance Gaussian sequence
				 and initial condition $\bs{x}(0)=\begin{bsmallmatrix}
					2\\3
				\end{bsmallmatrix}$, for $T=11$ (\textcolor{black}{ensuring square/fat Hankel matrices}), we collect a single trajectory data ${\mathcal{W}}_1={\bs{y}|_T}$ and ${\mathcal{W}}_2={\bs{x}|_T}$ satisfying \Cref{assump:1}  (In this case, {\Cref{assump:1} holds generically for random disturbances; see \cite[Remark 6]{DBO_10720866}}).
				 We now apply Algorithm 1 to obtain DBO for $\B$ and its parametrisation \textcolor{black}{using MATLAB's direct commands ($\mathtt{null, linsolve, rank}$) under default machine-precision settings.}
				We obtain that for $N=1$, the Hankel matrices $(\hankel{{\mathcal{W}_2}}{4})_N\in\R^{2\times 7}$ with \textcolor{black}{$\mathrm{rank}(\hankel{{\mathcal{W}_2}}{4})=2$} and $\hankel{{\mathcal{W}}_1}{4}\in\R^{5\times 7}$ with \textcolor{black}{$\mathrm{rank}(\hankel{{\mathcal{W}}_1}{4})=5$} satisfy \eqref{eq:datarecons} and for
				$
				R_{\rm LDBO} =
				\begin{bsmallmatrix}
					0 & 1 &  0&0&0 \\
					-1     &  -1 &  0      &  0&0
				\end{bsmallmatrix}
				$, we have
				$(\hankel{{\mathcal{W}_2}}{4})_N =R_{\rm LDBO}\hankel{{\mathcal{W}}_1}{4}$.
				Thus, $R_{\rm DBO}(\sigma)=\begin{bsmallmatrix}
						\sigma\\
					-1-\sigma
				\end{bsmallmatrix}$ and $\Sigma_1:$
				 $\sigma\widehat{\bs{x}}=R_{\rm DBO}(\sigma)\bs{y}$ is a DBO for $\B$. \textcolor{black}{It can be theoretically verified that $\Sigma_1$ forms a DBO for $\B$ (see \cite{valcher_bisiacco2006behavioral} for more details). Also, we obtained the same DBO across several datasets generated from different initial conditions, which further upholds the correctness of the algorithm.}
				Since, $\mathcal{H}_n({\mathcal{W}_1})$ is full row rank, $L(\sigma)=0$. Thus, for any $X(\sigb)\in \R^{2\times 2}[\sigma]$ with $\sigma X(\sigb)$ monomic, define $\begin{bsmallmatrix}
					-P(\sigma)&~Q(\sigma)
				\end{bsmallmatrix}=X(\sigb)\begin{bsmallmatrix}
					-R_{\rm DBO}(\sigma)&~\sigma I_2
				\end{bsmallmatrix}$. Then,  $Q(\sigma)\widehat{\bs{x}}=P(\sigma)\bs{y}$ is a DBO for $\B$ and every DBO can be obtained this way. \textcolor{black}{ For $X(\sigma)=\begin{bsmallmatrix}
					\sigma&0\\
					0&1
			\end{bsmallmatrix}$, we obtain another DBO $\Sigma_2:$ $\begin{bsmallmatrix}
			\sigma^2 \widehat{\bs{x}}_1\\ \sigma \widehat{\bs{x}}_2
		\end{bsmallmatrix}=\begin{bsmallmatrix}
		\sigma^2\\-1-\sigma
	\end{bsmallmatrix}\bs{y}$.  \Cref{fig:eg} provides the resulting state estimates using these two DBOs, with $\widehat{\bs{x}}(0)=(0,0)$. }  
\begin{remark}\label{rem:noisy data}
\textcolor{black}{	Note that while the process noise can be dealt with as an irrelevant variable (see \Cref{rem:irrelevant}), in the presence of the measurement noise, the ranks of the Hankel matrices are inflated. Therefore, the condition in \Cref{thm:datarecons} does not work directly. However, the rowspan condition suggests that we can reformulate it into an error-in-variables problem and solve it using the structured low rank approximation while preserving the Hankel structure of the matrices \cite{slra-software} to obtain an approximate observer using a noisy dataset. An in-depth analysis of the noisy historical data will be taken up in future.}
\end{remark}
	
				\section{Conclusion}\label{sec:conclusion}
				We studied constructing DBOs for relevant variables from measured variables using historical data for higher-order LTI systems. By introducing the notion of `local reconstructibility', we established a necessary and sufficient condition for the existence of a DBO from finite data. Under a suitable assumption on the historical data, we provided an algorithm to obtain a DBO and a parametrisation of all DBOs. Finally, we illustrated the results using state estimation in the presence of unknown inputs.
			Future work includes data-driven tests for the existence and construction of lower complexity, non-consistent DBOs \cite{valcher_bisiacco2006behavioral} and causal DBOs for real-time implementation.
	
	\bibliography{ref.bib}
\bibliographystyle{ieeetr}

\end{document}